\documentclass[
amsmath,amssymb,
twocolumn,
longbibliography,
superscriptaddress,
floatfix
]{revtex4-2}
\usepackage{tikz-cd}
\usepackage{bbm}
\usepackage{xcolor}
\usepackage{mathrsfs}
\usepackage[commandnameprefix=always]{changes}
\usepackage{epsfig}

\usepackage{soul,color}
\usepackage{graphicx}
\usepackage{mleftright}
\usepackage{amsfonts}
\usepackage{amsthm}
\usepackage[figuresright]{rotating}
\usepackage{amssymb}
\usepackage{amsmath}
\usepackage{dcolumn}
\usepackage{physics}
\usepackage{float}
\usepackage{bm}
\usepackage{verbatim}
\usepackage{braket}
\usepackage[normalem]{ulem}
\usepackage[ruled,vlined,linesnumbered]{algorithm2e}
\usepackage{setspace}
\usepackage[colorlinks,linkcolor=blue,anchorcolor=blue,citecolor=blue,urlcolor=blue]{hyperref}
\usepackage{stackengine}
\usepackage{subcaption}

\usepackage{tikz}
\usetikzlibrary{quantikz2}
\usepackage{adjustbox}

\usepackage{youngtab}

\newtheorem{theorem}{Theorem}

\newtheorem{corollary}{Corollary}[theorem]
\usepackage{fancyhdr}

\newcounter{protocol}

\usepackage{xcolor}

\definecolor{moss}{RGB}{44,186,0}

\usepackage{orcidlink}

\begin{document}

\title{Scalable Qumode-Qubit State Transfer and Fast-forward Quantum Fourier Transform using Oscillators}
\author{Joel Bierman}
\affiliation{Department of Electrical and Computer Engineering, North Carolina State University, Raleigh, NC 27606, USA}
\author{Shubdeep Mohapatra}
\affiliation{Department of Electrical and Computer Engineering, North Carolina State University, Raleigh, NC 27606, USA}
\author{Huiyang Zhou}
\affiliation{Department of Electrical and Computer Engineering, North Carolina State University, Raleigh, NC 27606, USA}
\author{Yuan Liu}%
 \email{q\_yuanliu@ncsu.edu}
\affiliation{Department of Electrical and Computer Engineering, North Carolina State University, Raleigh, NC 27606, USA}
\affiliation{Department of Computer Science, North Carolina State University, Raleigh, NC 27606, USA}
\affiliation{Department of Physics and Astronomy, North Carolina State University, Raleigh, NC 27607, USA}
\date{\today}

\begin{abstract}
Transferring the information stored in the expansion coefficients of a multi-qubit state to the coefficients of a continuous-variable state is an important protocol for communicating quantum information. It was shown in previous work how to transfer an $n$-qubit state to a single qumode in $\mathcal{O}(2^n)$ time. We show that by transferring this state to $m$ qumodes, the runtime can be improved to $\mathcal{O}(2^{n/m})$. Furthermore, we demonstrate how multi-qumode state transfer can be used as a subroutine for approximately realizing the $n$-qubits quantum Fourier transform on $m$-qumode with runtime scaling $\mathcal{O}(m2^{n/m}/\epsilon+m^2)$, accelerating qubit quantum Fourier transform using qumodes. This work presents a scalable approach to convert discrete and continuous quantum information between an arbitrary number of qubits and qumodes. It represents a crucial step forward in mixed analog-digital quantum signal processing for computing, sensing, and communication.
\end{abstract}

\maketitle

\section{Introduction}

Quantum computing using bosonic modes (continuous variable or CV) as information carriers has the potential to speed up the runtime of many computational primitives beyond that of qubits (discrete variable or DV).\cite{liu2024hybrid,kemper2025hybrid, nourse2025using, braunstein_quantum_2005, navickas_experimental_2025, ha_analog_2025} The intuition behind this is that a single qumode can in principle store wavefunctions in Hilbert spaces of arbitrarily large (subject to physical implementation constraints) Hilbert spaces, whereas a single qubit can only store wavefunctions of a two-level system.

For example, the encoding of a logical qubit (or qudit) as a two-dimensional ($d$-dimensional) subspace of a high-dimensional Hilbert space is essential to quantum error correction.\cite{raussendorf_fault-tolerant_2007, fowler_high-threshold_2009, fowler_surface_2012, girvin_introduction_2023, horsman_surface_2012, bravyi_high-threshold_2024, litinski_game_2019} This can be accomplished using a single bosonic oscillator in place of a large number of physical qubits. Examples include GKP codes\cite{gottesman_encoding_2001, lin_closest_2023, hastrup_analysis_2023, grimsmo_quantum_2021, brady_advances_2024} and cat codes~\cite{albert_pair-cat_2019, cochrane_macroscopically_1999, lee_fault-tolerant_2024, albert_bosonic_2022}. Notably, the use of GKP codes beyond break even has been demonstrated for qubits~\cite{sivak_real-time_2023} as well as level 3 and 4 qudits~\cite{brock_quantum_2025}.

The simulation of physical systems whose Hilbert spaces are inherently high-dimensional (\emph{e.g.} vibronic spectra of molecules) is another use-case where CV information carriers may have an advantage over qubits.~\cite{kemper2025hybrid, nourse2025using, ha_analog_2025, navickas_experimental_2025} For example, compiling native bosonic operations to qubit gates is often extremely costly.~\cite{liu2024hybrid} The quantum Fourier transform (QFT)~\cite{Nielsen_Chuang_2010, nam_approximate_2020, kahanamoku2025log, coppersmith2002approximate} is another example of an algorithm where the use of high-dimensional information carriers may enable a speed-up over qubits. The exact versions of both qubit and qudit QFT~\cite{pavlidis_quantum-fourier-transform-based_2021} scale with the number of information carriers $n$ as $\mathcal{O}(n^2)$. Thus, if one encodes a $d$-level qudit into an oscillator, then a $\mathcal{O}(\mathrm{log}_2^2(d))$ speed-up over qubits can (in principle) be achieved.

In addition to quantum error correction and computation, more generally, today's wireless and communication infrastructure uses electromagnetic waves (bosonic modes) as information carrier. Unleashing the quantum nature of bosonic modes for storage and manipulating information will open new possibilities for quantum communication systems.~\cite{obrien_photonic_2009, luo_recent_2023, wang_scalable_2025, wu_continuous-variable_2022} For example, consider the possibility that early fault-tolerant quantum computers may not be monolithic (\emph{i.e.} all qubits occupy a single chip), but could consist of of logical qubits distributed across many nodes.~\cite{Arquin} Thus, we will need efficient fault-tolerant protocols not just for the storage and processing of high-dimensional quantum states, but their transportation across extended distances as well. This can be accomplished by using ion shuttling on trapped ion platforms, for example.~\cite{akhtar_high-fidelity_2023} Photonic interconnects~\cite{stute_quantum-state_2013, main_distributed_2025}, wherein one transports states between distant qubits using photons as an intermediary, is another means of quantum communication across computational nodes. Teleportation methods~\cite{Eisert2000, stack2025assessing}, wherein shared Bell pair states are consumed as a resource to enable the teleportation of states and nonlocal operations, is another means of carrying out distributed quantum computation. Teleportation methods require the consumption of many low-fidelity Bell states to distill a single high-fidelity Bell state, which adds to the qubit cost of the computation~\cite{qLDPC_bell_state_distillation}. Ion shuttling is a promising potential tool for realizing distributed quantum computing, but is limited in the sense that physically moving ions will destroy any quantum information stored in the CV modes when the ion chain is disrupted. Thus, moving a single ion can only transport one qubit of quantum information at a time.

Generalizing previous work on transferring the state of an arbitrary number of qubits to a single qumode~\cite{liu2024mixed} to a protocol where we can use an arbitrary number of qumodes is a necessary step towards making this framework scalable. To see this, consider the fact that demonstrations of quantum advantage will almost certainly necessitate over 100 logical qubits, each of which is likely to be comprised of dozens to tens of thousands of physical qubits, depending on the error correction code used and the necessary logical error rate. The single-qumode state transfer protocol involves a maximum displacement (and thus energy consumption) that scales linearly with the Hilbert space dimension of the qubit subsystem being transferred. Displacing a qumode by an amount on the order of $\sim 2^{100}$ to $2^{100,000}$ (even in units of femtometers) is infeasible for all intents and purposes. Thus, if we want this protocol to be feasible to implement, the maximum displacement to which any single qumode is subjected has to be minimized.

In this work, we propose a protocol wherein an $n$-qubit state is transferred to $m$ qumodes. The method pairs each of the $m$ qumodes to $\frac{n}{m}$ qubits and consists of a tensor product of $m$ instances of the previously proposed single-qumode state transfer procedure~\cite{liu2024mixed}. Whereas the runtime of the single-qumode state transfer scaled as $\mathcal{O}(2^n)$, the runtime of the method in the current work scales as $\mathcal{O}(2^{n/m})$. That is, provided that we can keep the ratio $\frac{n}{m}$ fixed, we can transfer states of arbitrary dimension to and from qumodes in constant time. This result is detailed in Theorem~\ref{thm: abelian state transfer}. We show that not only does this \emph{exponentially} reduce the amount of energy required for the process, but we show in Theorem~\ref{thm: photon loss} that the photon loss rate is exponentially suppressed as well. We also demonstrate in Sec.~\ref{sec: error suppression} that this multi-qumode state transfer procedure can passively suppress some position displacement errors, which may be of independent interest and which we expect will have positive consequences for its use in distributed quantum computation.

Additionally, we show that this state transfer procedure can be used not only for communication but as an algorithmic subroutine for computation as well. We detail how this state transfer procedure can be used to approximately implement the quantum Fourier transform with runtime scaling $\mathcal{O}(m2^{n/m}/\epsilon + m^2)$, where $\epsilon$ is the leading order contribution to the error. The full result is given in Theorem~\ref{thm: QFT theorem}.

This paper is structured as follows. Sec.~\ref{sec: background} reviews necessary prerequisite information for the rest of the work, including background on quantum signal processing (QSP) and single mode state transfer. Sec.~\ref{sec: multi-mode protocols} presents our work on the multi-mode parallel state transfer and the multi-mode QFT. Sec.~\ref{sec: runtime and fidelity bounds} provides runtime and fidelity bound complexities for the state transfer and qumode QFT protocols. Sec.~\ref{sec: error suppression} shows how the abelian state transfer procedure can passively suppress some position displacement errors (assuming perfect implementation of the state transfer circuit itself). Sec.~\ref{sec: photon loss rate} details how using multiple qumodes for state transfer exponentially suppresses the photon loss rate compared to its single-qumode counterpart. Sec.~\ref{sec: state transfer numerical simulation} provides a numerical demonstration of the abelian multi-qumode state transfer. Sec.~\ref{sec: comparison to qubit qft} presents a back-of-the-envelope resource estimation comparison to qubit QFT and presents approximate conditions for when qumode QFT has favorable runtime scaling. In Sec.~\ref{sec: conclusion} we finish with a discussion, concluding remarks, and potential future directions of research.

\section{Background}\label{sec: background}

In this section we give a brief overview of the prerequisite conceptual ideas that are necessary for the rest of the paper. Sec.~\ref{sec: hybrid GQSP} outlines QSP on qubits on hybrid CV-DV systems. Sec.~\ref{sec: single-qumode state transfer} outlines the single-qumode state transfer protocol.

\subsection{Hybrid Single-Variable QSP}\label{sec: hybrid GQSP}

QSP~\cite{MRTC21, rossi_multivariable_2022, martyn_efficient_2023, low_optimal_2017, joven_scalable_2026, lin2025mathematical, gilyen_quantum_2019, lin_optimal_2020} is a method wherein one takes a block encoding of either scalar or matrix quantities and generates block encodings of polynomial transformations of those objects. If we can realize both conditional displacements:
\begin{equation}
    W^{(\kappa)}=\ket{0}\bra{0}\otimes e^{i\tfrac{\kappa}{2}\hat{x}} + \ket{1}\bra{1}\otimes e^{-i\tfrac{\kappa}{2}\hat{x}} = \begin{bmatrix}
        e^{-i\tfrac{\kappa}{2} \hat{x}} & 0\\
        0 & e^{i\tfrac{\kappa}{2}\hat{x}}
    \end{bmatrix}
\end{equation} and single-qubit Pauli-X rotations $e^{i\phi_j\sigma_x}$, then we can prepare block-encodings of degree $d$ polynomials of $e^{i\tfrac{\kappa}{2}\hat{x}}$ into the qubit subspace as \cite{liu2024mixed}:
\begin{equation}
    \begin{bmatrix}
        P(e^{-i\tfrac{\kappa}{2}\hat{x}}) & iQ(e^{-i\tfrac{\kappa}{2}\hat{x}})^\dagger\\
        iQ(e^{i\tfrac{\kappa}{2}\hat{x}}) & P(e^{i\tfrac{\kappa}{2}\hat{x}})^\dagger \\
    \end{bmatrix} = e^{i\phi_0\sigma_x}\prod_{j=1}^d W^{(\kappa)}e^{i\phi_j\sigma_x}.
\end{equation}Here, $P$ and $Q$ are degree $d$ Laurent polynomials:
\begin{align}
    P(e^{-i\tfrac{\kappa}{2}\hat{x}}) = \sum_{n=-d}^dp_ne^{-i\tfrac{n\kappa}{2}\hat{x}}\\
    Q(e^{-i\tfrac{\kappa}{2}\hat{x}}) = \sum_{n=-d}^dq_ne^{-i\tfrac{n\kappa}{2}\hat{x}}.
\end{align}The protocol is then to compute real-valued Fourier coefficients $p_n$ and $q_n$ whose expansions give approximations to target functions $p(\hat{x})$ and $q(\hat{x})$, respectively.
\subsection{ Abelian Single-Mode State Transfer}\label{sec: single-qumode state transfer}

In abelian single-mode state transfer~\cite{liu2024mixed}, we start with the $n$-qubit state:
\begin{align}
    \ket{\psi}_Q = \sum_{x=0}^{2^n-1}C_x\ket{x}_Q
\end{align}where $x = \sum_{j=1}^nx_n \cdot2^{n-1}$ and $\ket{x} = \bigotimes_{j=1}^n\ket{x_j}$. We wish to transfer the coefficients $C_x$ to be the coefficients of Gaussian wavepackets centered about position $x\Delta$ on a qumode. That is, we want to produce the corresponding qumode state:
\begin{align}
    \ket{\psi}_O = \sum_{x=0}^{2^n-1}C_x\ket{x,\Delta}^{gauss}_O
\end{align}
where:
\begin{align}
    \ket{x, \Delta}^{gauss}_O = \frac{1}{(2\pi\sigma^2)^{1/4}}\int dq e^{-(q-x\Delta)^2/4\sigma^2}\ket{q}_O
\end{align}where $\ket{q}$ is a position basis state of the oscillator. Here, $\Delta$ represents the spacing in our lattice of Gaussian wavepackets. If $\sigma \ll \Delta$, then these Gaussian wavepackets are orthogonal to a good approximation.

We initialize our hybrid qubit-oscillator state as:
\begin{align}
    \ket{\psi}_Q\otimes\ket{0,\Delta}^{gauss}_O.
\end{align}Note that each gate in the following sequence of controlled displacement gates:
\begin{align}
    \prod_{j=1}^n cD_j(2^{j-1}\Delta)
\end{align}displaces the qumode by an amount $2^{j-1}\Delta$ if and only if the $j$th bit of the qubit basis state $\ket{x}_Q$ is $1$. Therefore, the aggregate action of this gate sequence is, for any qubit basis state $\ket{x}_Q$, to displace the qumode by an amount $x\Delta$. This gives us the entangled state:
\begin{align}
    \sum_{x=0}^{2^n-1}C_x\ket{x}_Q\ket{x,\Delta}^{gauss}_O.
\end{align}We can then apply a sequence of $n$ CV-DV QSP sequences, where the $jth$ QSP sequence acts on the $jth$ qubit, performing a bit-flip if and only if the $jth$ bit in the binary expansion of $x$ in $\ket{x,\Delta}^{gauss}_O$ is 1 and acts as the identity otherwise. This can be realized by block encoding polynomial approximations to square wave functions as:
\begin{equation}
    R_j(\hat{x}) = \begin{pmatrix}
        S_j(\hat{x}) & \sqrt{1-S_j(\hat{x}})\\
        -\sqrt{1-S_j(\hat{x})} & S_j(\hat{x})
    \end{pmatrix}
\end{equation}where:
\begin{equation}\label{eq: original square wave}
    S_j(\hat{x})= \Theta\left(\mathrm{cos}\left[\tfrac{\pi}{2^{n-j}}(\tfrac{\hat{x}}{\Delta}-2^{n-j-1}+\tfrac{1}{2})\right]\right)
\end{equation}This will disentangle the qubits from the oscillator, giving us the state:
\begin{align}
    &\ket{0}^{\otimes n}_Q\otimes\sum_{x=0}^{2^n-1}C_x\ket{x,\Delta}_O^g\\
    &=\ket{0}_Q^{\otimes n}\otimes\ket{\psi}_O.
\end{align}The circuit for this protocol is given in Fig.~\ref{fig:single-mode state transfer}.
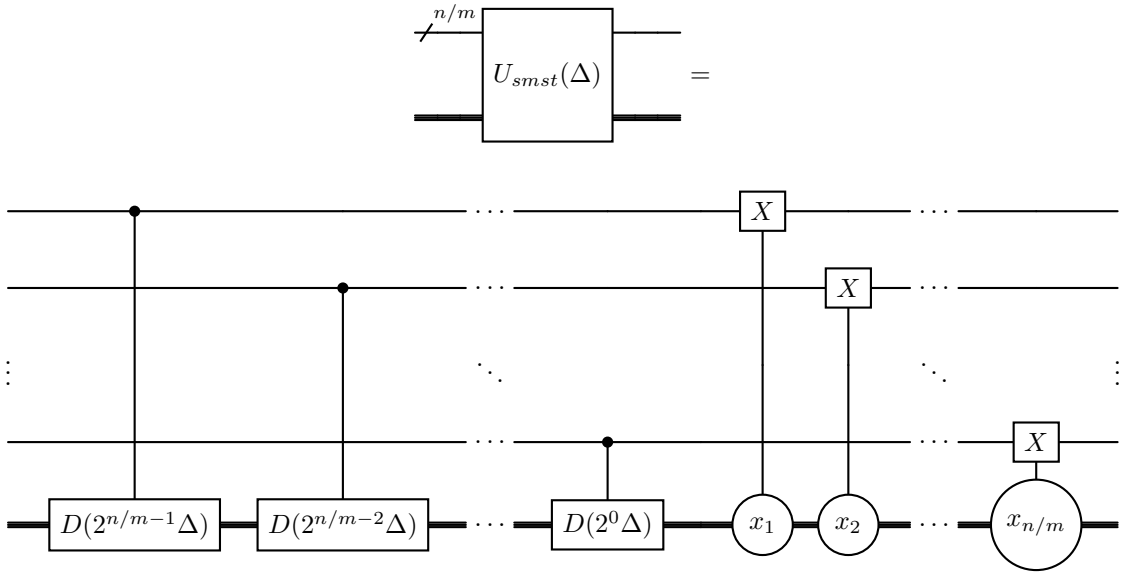
\begin{figure*}[ht!]
    \centering
    \begin{quantikz}[wire types={q,b}, column sep = 0.3cm]
            & \qwbundle[]{n/m} & & \gate[2]{U_{smst}(\Delta)}& & &\\
            & & & & & &\\
        \end{quantikz} = 
        \begin{quantikz}[wire types={q,q,n,q,b}]
            & \ctrl{4} & & \ \ldots \ & & & \gate[]{X} & & \ \ldots \ & &\\
            & & \ctrl{3} & \ \ldots \ & & & \wire[u]{q} & \gate[]{X} & \ \ldots \ & &\\
            \vdots & & & \ddots & & & \wire[u]{q} & \wire[u]{q} & \ddots & & \vdots\\
            & & & \ \ldots \ & \ctrl{1} & & \wire[u]{q} & \wire[u]{q} & \ \ldots \ & \gate[]{X} &\\
            & \gate[]{D(2^{n/m-1}\Delta)} & \gate[]{D(2^{n/m-2}\Delta)} & \ \ldots \ & \gate[]{D(2^0\Delta)} & & \gate[style=circle]{x_1}\wire[u]{q} & \gate[style=circle]{x_2}\wire[u]{q} & \ \ldots \ & \gate[style=circle]{x_{n/m}}\wire[u]{q} &\\
        \end{quantikz}
    \caption{The single-mode state transfer circuit. The first stage consists of a sequence of controlled displacement gates whose net action is, for any qubit basis state $\ket{x}$, to displace the qumode by an amount $x\Delta$. In the second stage, a sequence of QSP sequences, where the $j$th sequence flipps the $j$th bit conditioned on the $j$th bit of $x$ being 1.}
    \label{fig:single-mode state transfer}
\end{figure*}
\subsection{Non-abelian single-mode state transfer}

While we primarily focus on the abelian state transfer procedure in this work, it is also worth taking note that Ref.~\cite{liu2024mixed} also presents formulation of the state transfer protocol of Ref.~\cite{Hastrup2022} in terms of multi-variable non-abelian QSP. The single-mode version of this procedure applies the gate sequence:
\begin{equation}
\begin{split}
    U_{smst}^{na}(\Delta;n) = &e^{-i\tfrac{\Delta}{2}2^{n-1}\hat{p}\hat{\sigma}_x^{(j)}}e^{i\tfrac{\pi}{2^{n}\Delta}\hat{x}\hat{\sigma}_y^{(j)}}\\
    &\times\prod_{j=n-1}^1e^{i\tfrac{\Delta}{2}2^{j-1}\hat{p}\hat{\sigma}_x^{(j)}}e^{i\tfrac{\pi}{2^j\Delta}\hat{x}\hat{\sigma}_y^{(j)}}
\end{split}
\end{equation}to an initial state of the form:
\begin{equation}
\begin{split}
    \sum_{\textbf{s}}c_{\textbf{s}}\ket{\phi_{\textbf{s}}}_Q\otimes\ket{0,\Delta}_{O}^{\text{sinc}}
\end{split}
\end{equation}where:
\begin{equation}
    \ket{0,\Delta}_{O}^{\text{sinc}} = \frac{1}{\sqrt{\Delta}}\int dq~\mathrm{sinc}(\tfrac{\pi q}{\Delta})\ket{q}_O
\end{equation}and:
\begin{equation}
    \ket{\phi_{\textbf{s}}}_Q = (-1)^{\gamma_{\textbf{s}}}\cdot\bigotimes_{j=1}^n\left[\frac{1}{\sqrt{2}}(\ket{0}+\ket{1})_{Q_j}\right].
\end{equation}$\textbf{s}$ is a vector in the set $\{-1,1\}^n$. $\gamma_{\textbf{s}}$ is defined as:
\begin{equation}
    \gamma_{\textbf{s}} = \sum_{j=1}^{n-2}\frac{1}{2}(s_j+s_{j+1}) + \frac{1}{2}(s_{n-1}-s_n).
\end{equation}This will produce the state:
\begin{equation}
\begin{split}
    \ket{0}^{\otimes n}_Q\otimes\sum_{\textbf{s}}c_{\textbf{s}}\frac{1}{\sqrt{\Delta}}\int dq~\mathrm{sinc}\left(\frac{\pi(q-q_{\textbf{s}})}{\Delta}\right)\ket{q}_O
\end{split}
\end{equation}where:
\begin{equation}
    q_{\textbf{s}} = \frac{\Delta}{2}\left(\sum_{j=1}^{n-1}s_j2^{j-1}-s_n2^{n-1}\right).
\end{equation}The action of this procedure is directly analogous to that of the abelian version, except that the qumode wavefunction is expressed in a basis of sinc wavepackets. This procedure has runtime $\mathcal{O}(2^n)$ with fidelity:
\begin{equation}
\begin{split}
    1 - \mathcal{O}\left(\int_{-\infty}^{-\tfrac{\Delta}{2}(2^n-1)}dq|\psi(q)|^2 + \int_{\tfrac{\Delta}{2}(2^n-1)}^{\infty}dq|\psi(q)|^2\right)
\end{split}
\end{equation}where:
\begin{equation}
    \psi(q) = \frac{1}{\sqrt{\Delta}}\sum_{\textbf{s}}c_{\textbf{s}}\mathrm{sinc}\left(\frac{\pi(q-q_{\textbf{s}})}{\Delta}\right).
\end{equation}
\section{Multi-Mode Protocols}\label{sec: multi-mode protocols}

In this section, present the multi-qumode state transfer process and give a high level overview of the multi-qumode QFT. The details of the multi-qumode QFT procedure are deferred to Appendix.~\ref{sec: appendix QFT derivation}. In Sec.~\ref{sec: multimode state transfer} we present the multi-mode state transfer. In Sec.~\ref{sec: main body QFT} we present the construction of the multimode QFT.

\subsection{Multi-Mode State Transfer}\label{sec: multimode state transfer}

In the non-abelian single-qumode state transfer protocol, the qubit computational basis state $\ket{x}$ was encoded into a Gaussian wavepacket centered about position $x\Delta$. This was done using a sequence of controlled displacements between each of the qubits and the oscillator that would displace the oscillator by an amount corresponding to the binary expansion of $x$. That is, the oscillator functioned essentially as one $2^n$-dimensional qudit. 

In what follows, we formulate a multi-qumode state transfer procedure for the non-abelian case, but we note that all of the presented reasoning also applies to a generalization of the non-abelian single-mode state transfer $U_{smst}^{na}(\Delta;n)$. The construction is directly analogous.

The key insight in the generalization to the multi-qumode state transfer is that by partitioning the qubits into subsets, we can perform a lower-dimensional version of the single-qumode state transfer procedure between each subset and an oscillator. Thus, each oscillator functions as a low-dimensional qudit whose dimension we can control simply by changing the number of qubits in each subset. For $m$ oscillators, the maximum amount of displacement we need is $2^{\tfrac{n}{m}-1}\Delta$. The ratio $\frac{n}{m}$ is a quantity which we control, thus for arbitrary $n$, we can convert the exponential scaling of the single-qumode state transfer procedure to a (possibly large) constant overhead by increasing $m$ and keeping the ratio $\frac{n}{m}$ fixed. We now prove that the correctness of the multi-qumode state transfer procedure given in Fig.~\ref{fig:multimode state transfer} follows from the correctness of the single qumode state transfer protocol.

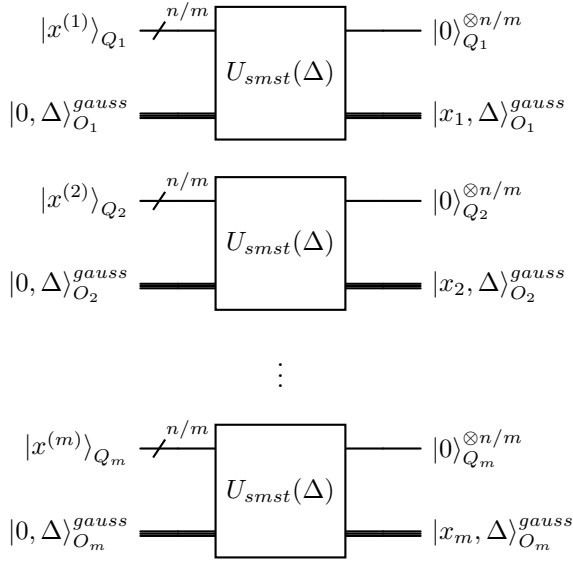
\begin{figure}[ht!]
    \centering
    \begin{quantikz}[wire types={q,b,q,b,n,q,b}]
        \lstick{$\ket{x^{(1)}}_{Q_1}$} & \qwbundle[]{n/m} & \gate[2]{U_{smst}(\Delta)} & & \rstick{$\ket{0}^{\otimes n/m}_{Q_1}$}\\
        \lstick{$\ket{0,\Delta}^{gauss}_{O_1}$} & & & & \rstick{$\ket{x_1,\Delta}_{O_1}^{gauss}$}\\
        \lstick{$\ket{x^{(2)}}_{Q_2}$} & \qwbundle[]{n/m} & \gate[2]{U_{smst}(\Delta)} & & \rstick{$\ket{0}^{\otimes n/m}_{Q_2}$}\\
        \lstick{$\ket{0,\Delta}_{O_2}^{gauss}$} & & & & \rstick{$\ket{x_2,\Delta}^{gauss}_{O_2}$}\\
        & & \vdots & & \\
        \lstick{$\ket{x^{(m)}}_{Q_m}$} & \qwbundle[]{n/m} & \gate[2]{U_{smst}(\Delta)} & &\rstick{$\ket{0}^{\otimes n/m}_{Q_m}$}\\
         \lstick{$\ket{0,\Delta}_{O_m}^{gauss}$} & & & & \rstick{$\ket{x_m,\Delta}^{gauss}_{O_m}$}
        \end{quantikz}
    \caption{The multi-mode state transfer and its action on a multi-qubit basis state. Here, the single-mode state transfer is denoted as $U_{smst}$.}
    \label{fig:multimode state transfer}
\end{figure}

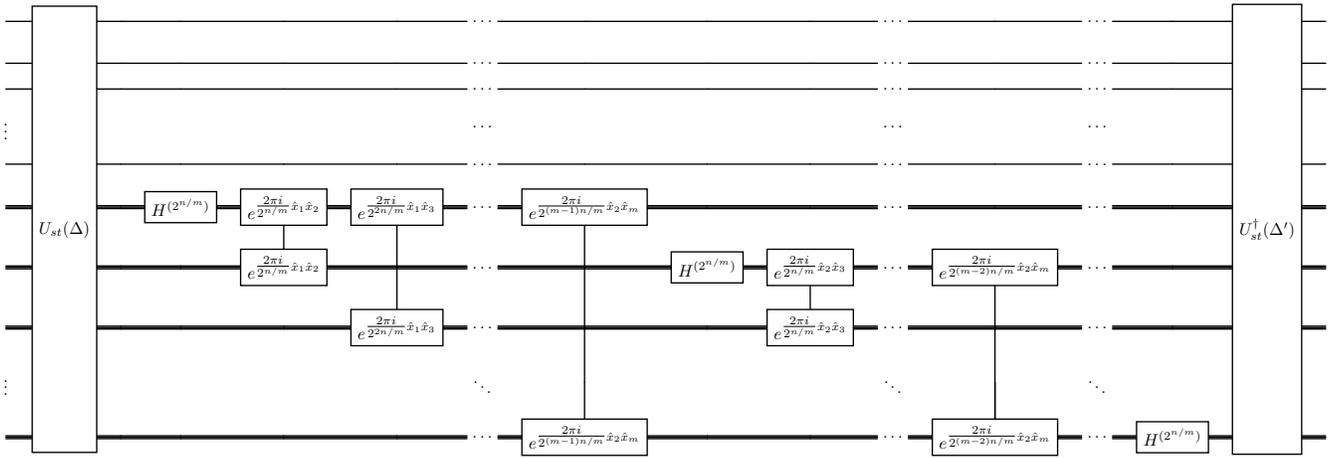
\begin{figure*}[ht]
    \centering
    \resizebox{\textwidth}{!}{%
    \begin{tikzpicture}
    \node[scale=0.65] {
    \begin{quantikz}[wire types={q,q,q,n,q,b,b,b,n,b}]
        & \gate[10]{U_{st}(\Delta)} & & & & & \ \ldots \ & & & & \ \ldots \ & & \ \ldots \ & & \gate[10]{U^\dagger_{st}(\Delta^\prime)} &\\
        & & & & & & \ \ldots \ & & & & \ \ldots \ & & \ \ldots \ & & &\\
        & & & & & & \ \ldots \ & & & & \ \ldots \ & & \ \ldots \ & & &\\
        \vdots & & & & & & \ \ldots \ & & & & \ \ldots \ & & \ \ldots \ & & &\\
        & & & & & & \ \ldots \ & & & & \ \ldots \ & & \ \ldots \ & & &\\
        & & & \gate[]{H^{(2^{n/m})}} & \gate[]{e^{\tfrac{2\pi i}{2^{n/m}}\hat{x}_1\hat{x}_2}}& \gate[]{e^{\tfrac{2\pi i}{2^{2{n/m}}}\hat{x}_1\hat{x}_3}} & \ \ldots \ & \gate[]{e^{\tfrac{2\pi i}{2^{(m-1){n/m}}}\hat{x}_2\hat{x}_m}} & & & \ \ldots \ & & \ \ldots \ & & &\\
        & & & & \gate[]{e^{\tfrac{2\pi i}{2^{n/m}}\hat{x}_1\hat{x}_2}}\wire[u]{q} & \wire[u]{q} & \ \ldots \ & \wire[u]{q} &\gate[]{H^{(2^{n/m})}} & \gate[]{e^{\tfrac{2\pi i}{2^{n/m}}\hat{x}_2\hat{x}_3}} & \ \ldots \ & \gate[]{e^{\tfrac{2\pi i}{2^{(m-2){n/m}}}\hat{x}_2\hat{x}_m}} & \ \ldots \ & & &\\
        & & & & & \gate[]{e^{\tfrac{2\pi i}{2^{2{n/m}}}\hat{x}_1\hat{x}_3}}\wire[u]{q} & \ \ldots \ & \wire[u]{q} & & \gate[]{e^{\tfrac{2\pi i}{2^{n/m}}\hat{x}_2\hat{x}_3}}\wire[u]{q} & \ \ldots \ & \wire[u]{q} & \ \ldots \ & & &\\
        \vdots & & & & & & \ddots & \wire[u]{q} & & & \ddots & \wire[u]{q} & \ddots & &\\
        & & & & & & \ \ldots \ & \gate[]{e^{\tfrac{2\pi i}{2^{(m-1){n/m}}}\hat{x}_2\hat{x}_m}}\wire[u]{q} & & & \ \ldots \ & \gate[]{e^{\tfrac{2\pi i}{2^{(m-2){n/m}}}\hat{x}_2\hat{x}_m}}\wire[u]{q}\wire[u]{q}& \ \ldots \ & \gate[]{H^{(2^{n/m})}}& &\\
    \end{quantikz}
    };
    \end{tikzpicture}
    }
    \caption{The multi-qumode QFT circuit. Ancilla qubits used for the qumode Hadamard gates are omitted for simplicitly.      }
    \label{fig:multimode QFT circuit}
\end{figure*}

Consider the simplified case where the number of qubits in each partition is the same. Index each partition of qubits by $j$. The state that we wish to transfer is given by:
\begin{equation}
    \ket{\psi}_Q = \sum_{x^{(1)},\dots,x^{(m)}=0}^{2^{n/m}-1}C_{x^{(1)}\cdots x^{(m)}}\bigotimes_{j=1}^m\ket{x^{(j)}}
\end{equation}We want a circuit $U_{st}$ such that:
\begin{equation}\label{eq: multi-qumode state transfer}
\begin{split}
    U_{st}&\ket{\psi}_Q\bigotimes_{j=1}^m\ket{0, \Delta}_{O_j}^{gauss}=\ket{0}^{\otimes n}\otimes \ket{\psi}_O\\
\end{split}
\end{equation}where
\begin{equation}
    \ket{\psi}_O = \sum_{x^{(1)},\dots,x^{(m)}=0}^{2^{n/m}-1}C_{x^{(1)}\cdots x^{(m)}}\bigotimes_{j=1}^m\ket{x^{(j)}, \Delta}_{O_j}^{gauss}
\end{equation}The fact that we can accomplish this using $m$ copies of the single-qumode state transfer circuit acting in parallel on each qubit partition follows from the correctness of the single-qumode circuit and the linearity of quantum gates. Throughout this work, we will use the subscript $smst$ (``single-mode state transfer") to distinguish it from the multi-mode state transfer process as a whole. Denote a single-qumode state transfer circuit acting between on qubit partition and oscillator $j$ by $U_{smst}^{(j)}$. The action of $U_{smst}^{(j)}$ on $\ket{x^{(j)}}_{Q_j}\otimes\ket{0,\Delta}_{O_j}^{gauss}$ is given by:
\begin{equation}
    U_{smst}^{(j)}\ket{x^{(j)}}_{Q_j}\otimes\ket{0,\Delta}_{O_j}^{gauss} = \ket{0}^{\otimes m}\otimes\ket{x^{(j)},\Delta}_{O_j}^{gauss}
\end{equation}therefore:
\begin{equation}
    \bigotimes_{j=1}^mU_{smst}^{(j)}\ket{x^{(j)}}\otimes\ket{0,\Delta}_{O_j}^{gauss} = \bigotimes_{j=1}^m\ket{0}_{Q_j}\otimes\ket{x^{(j)},\Delta}_{O_j}^{gauss}.
\end{equation}Thus, by the linearity of quantum circuits, it follows that a tensor product of such subcircuits will yield the transformation in Eq.~\eqref{eq: multi-qumode state transfer}. Thus, the oscillator basis state $\ket{x^{(1)},\Delta}^{gauss}_{O_1}\otimes\cdots\otimes\ket{x^{(m)},\Delta}^{gauss}_{O_j}$ encodes the integer $x$ whose d-ary expansion is $x = x^{(1)}\cdot d^{0}+x^{(2)}\cdot d^{1}+\cdots+x^{(m)}\cdot d^{m-1}$, where $d = 2^{n/m}$.

\subsection{Multi-Mode QFT}\label{sec: main body QFT}

One use-case of the multi-qumode state transfer protocol is that we can implement subroutines with qubit logic inputs and outputs by offloading them to qumodes. We transfer the state from the qubits to the qumodes, implement a sequence of gates which utilize the high-dimensional nature of the qumodes, then transfer the state back to qubits.

The quantum Fourier transform (QFT), well-known for its use in other quantum algorithms such as phase estimation or Shor's algorithm, is one such candidate for this strategy. It is defined as the following operator acting on a Hilbert space of dimension $N$:
\begin{equation}
\begin{split}
    QFT_N = \frac{1}{\sqrt{N}}\sum_{x,y=0}^{N-1}e^{2\pi ixy/N}\ket{x}\bra{y}.
\end{split}
\end{equation}
Running the QFT on both qubits and $d$-level qudits scales quadratically with the number of information carriers. However, for a fixed number $n$ of information carriers, $d$-level qudits will have a Hilbert space dimension of $d^N$ as opposed to $2^N$ for qubits. The number of qubits needed to simulate $N$ qudits scales as $N\mathrm{log}(d)$, therefore in principle qudit QFT can achieve a polylogarithmic runtime speedup in the qudit level over qubit QFT.

The intuition behind our current approach is that if we can construct a set of qumode gates whose logical action on the qumodes is the same as the corresponding qudit gates in the qudit QFT, then it follows that we can construct the entire ``logical" qudit QFT on the qumodes. We obtain the QFT on the qubits by conjugating this gate sequence with a state transfer procedure and its inverse. Since, for $\tfrac{n}{m}$ qubits per qumode in the state transfer, each qumode acts as a $2^{n/m}$-level system, our goal is to implement primitives which enable $2^{n/m}$-level qudit logic.

Qudit QFT involves two primitives: a qudit Hadamard and a qudit controlled rotation gate.~\cite{pavlidis_quantum-fourier-transform-based_2021}. The actions of each of these gates on qudit basis states $\ket{x^{(1)}}$ and $\ket{x^{(1)}}\otimes\ket{x^{(2)}}$ are given by:
\begin{equation}\label{eq: qudit Hadamard gate}
    H^{(d)}\ket{x^{(j)}} = \frac{1}{\sqrt{d}}\sum_{x^{(j)}=0}^{d-1}\sum_{y^{(j)}=0}^{d-1}e^{i\tfrac{2\pi}{d}x^{(j)}y^{(j)}}\ket{y^{(j)}}
\end{equation}and:
\begin{equation}
    R_{k}^{(d)}\ket{x^{(j)}}\ket{x^{(j^\prime)}} = e^{i\tfrac{2\pi}{d^k}x^{(j)}x^{(j^\prime)}}\ket{x^{(j)}}\ket{x^{(j^\prime)}}.
\end{equation}Thus, what we need are  qumode gate sequences that implement the same logic on our Gaussian wavepacket states $\ket{x^{(j)},\Delta}^{gauss}_{O_j}$. We can then apply these gates in the same fashion as the qudit QFT gate. We emphasize that although the qumodes in our construction bear resemblance to qudits in that we are using them to store discrete $d$-level information, they are functionally not qudits in the strictest sense of the word. This is because qudit systems obey \emph{modulo} $d$ logic, whereas qumodes do not. \emph{i.e.} shift operations on Gaussian wavepackets are non-periodic.

\subsection{Mapping qudit gates to oscillator gates}

In this section, we outline the construction of the Hadamard and controlled rotation gates on qumodes. We then will put these building blocks together to implement the QFT circuit on qumodes. This amounts to implementing logical qudit gates which obey modulo $d$ arithmetic on non-periodic oscillator systems which do not. This non-periodicity presents no issues when implementing diagonal qudit operators such as the controlled rotation gate, however we will soon see how this presents significant challenges for efficiently implementing non-diagonal operators such as the qudit Hadamard gate.

\subsubsection{Hadamard gate}

We first note that the action of the qudit Hadamard in Eq.\eqref{eq: qudit Hadamard gate} is the same as the quantum Fourier transform on $\mathrm{log}_2(d)$ qubits. Implementing the qubit quantum Fourier transform on qubits using a single qumode as an ancilla has already been worked out in Ref.~\cite{liu2024mixed} We can take inspiration from this protocol to construct the qudit Hadamard gate, however we note two main key differences between the goals of that gate sequence and those of the current work.

The first is that the exact action of the QFT is never precisely implemented on the qumodes themselves in the previous work. It only emerges on the qubits after completing the inverse state transfer, with the qumode being left in a linear combination of Gaussian amplitudes that encode information we do not need. In the present work, we need the opposite, with the qumode containing the Fourier coefficients and a collection of ancilla qubits containing discardable information after some inverse ``state transfer-like" gate sequence.

The second difference is that in the single-qumode protocol, the state being initially transferred is already periodic through the use of $a$ padding qubits initialized in the $\ket{+}^{\otimes a}$ state. In the present work, we need a protocol which first transfers a non-periodic state to the qumodes, then makes the state of each qumode periodic mid-circuit. The reason for this is the entangling operations required by the qudit QFT sequence will induce erroneous phases if the position of any of the oscillators contains redundant information about the periodicity in addition to the data itself.

In this section, we outline the construction of the qudit Hadamard gate. The details of the derivation are in Appendix.~\ref{sec: Hadamard gate derivation}. We start by supposing that we have already transferred the state from $\tfrac{n}{m}$ qubits to the oscillator such that the qumode state is a linear combination of Gaussian wavepackets $\ket{x^{(j)}, \Delta}^{gauss}_{O_j}$. The complete gate sequence can be enumerated as follows:

\begin{enumerate}
    \item Apply a ``padding" gate $U_p^{(j)}(\Delta)$ between an $a$-qubit ancilla register and the $j$th qumode.
    \item Apply a displacement gate $D_j(-\tfrac{2^{n/m+a}}{2}\Delta)$ to the $j$th qumode.
    \item Apply the free evolution gate $F_j$ to the $j$-th qumode.
    \item Apply an ''anti-padding'' gate $U_{ap}^{(j)}(\Delta^\prime)$ between the $j$th qumode and $a^\prime$ anti-padding qubits. 
\end{enumerate}

The purpose of the first step is to make the wavefunction periodic with respect to $x^{(j)}$ so that the non-periodic qumode behaves logically like a qudit. That is, Ref.~\cite{liu2024mixed} found that the free evolution gates gives rise to the Fourier phases with error $\mathcal{O}(1/2^a)$ when padded with $a$ qubits. The same is true for step 3 in the above procedure. We can accomplish this by requiring that the padding gate has the following action:
\begin{equation}
\begin{split}
    &U_p^{(j)}\ket{+}^{\otimes a}\ket{x^{(j)},\Delta}_{O_j}^{gauss}\\
    &= \ket{0}^{\otimes a}\frac{1}{\sqrt{2^a}}\sum_{k^{(j)}=0}^{2^a-1}\ket{2^{n/m}k^{(j)}+x^{(j)}, \Delta}^{gauss}_{O_j}.
\end{split}
\end{equation}Note the similarity with the action of the state transfer circuit. The key difference here is that $2^{n/m}k^{(j)}+x^{(j)}$ is an integer where bits $a+1,\dots,a+\tfrac{n}{m}$ encode information about $x^{(j)}$ and bits $\tfrac{n}{m}+1,\dots\tfrac{n}{m}+a$ encode information about the periodicity index $k^{(j)}$. We can accomplish this with a version of the single-variable state transfer sequence that has been modified in the following two ways. The first is that the controlled displacements in the entangling stage use an effective spacing $2^{n/m}\Delta$. The second is that the quare wave polynomials in the QSP sequences of the disentangling stage act only on bits $1,\dots,a$ of the integer encoding the position of the qumode. The circuit for this padding gate is given in Fig.~\ref{fig:padding gate}. Here, the disentangling stage consists of $a$ QSP sequences which block encode a polynomial approximation of the modified square wave:
\begin{equation}\label{eq: forward padding GQSP Polynomial}
    S_{j,pad}^{(\Delta,a)}(\hat{x})= \Theta\left(\mathrm{cos}\left[\tfrac{\pi}{2^{{n/m}+a-j}}(\tfrac{\hat{x}}{\Delta}-2^{{n/m}+a-j-1}+\tfrac{1}{2})\right]\right)
\end{equation}for $j=1,\dots,a$. We note that this has been modified from Eq.~\eqref{eq: original square wave} in that the number of data qubits $n$ has been replaced by $\tfrac{n}{m} + a$ and $j$ runs from $1$ to $a$ instead of $1$ to $n$.

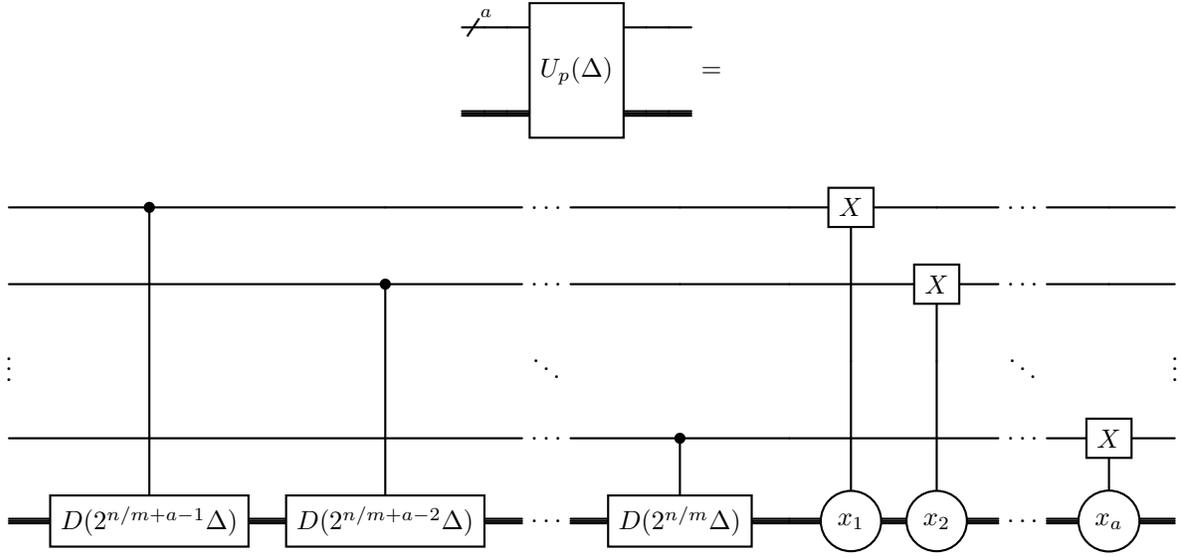
\begin{figure*}[ht!]
    \centering
    \begin{quantikz}[wire types={q,b}, column sep = 0.3cm]
            & \qwbundle[]{a} & & \gate[2]{U_{p}(\Delta)}& & &\\
            & & & & & &\\
        \end{quantikz} = 
        \begin{quantikz}[wire types={q,q,n,q,b}]
            & \ctrl{4} & & \ \ldots \ & & & \gate[]{X} & & \ \ldots \ & &\\
            & & \ctrl{3} & \ \ldots \ & & & \wire[u]{q} & \gate[]{X} & \ \ldots \ & &\\
            \vdots & & & \ddots & & & \wire[u]{q} & \wire[u]{q} & \ddots & & \vdots\\
            & & & \ \ldots \ & \ctrl{1} & & \wire[u]{q} & \wire[u]{q} & \ \ldots \ & \gate[]{X} &\\
            & \gate[]{D(2^{n/m+a-1}\Delta)} & \gate[]{D(2^{n/m+a-2}\Delta)} & \ \ldots \ & \gate[]{D(2^{n/m}\Delta)} & & \gate[style=circle]{x_{1}}\wire[u]{q} & \gate[style=circle]{x_{2}}\wire[u]{q} & \ \ldots \ & \gate[style=circle]{x_{a}}\wire[u]{q} &\\
        \end{quantikz}
    \caption{The padding gate. The first stage consists of a sequence of controlled displacement gates which, for any $a$-qubit basis state $\ket{k}$ displaces the qumode by an amount $2^{n/m}k\Delta$. The second stage consists of a sequence of disentangling QSP sequences which flip the $j$th qubit conditioned on the $j$th bit of the integer $k$ encoded in the oscillator being 1.}
    \label{fig:padding gate}
\end{figure*}

The derivation of the action and error bounds of steps 2 and 3 follow directly from the work of Ref.~\cite{liu2024mixed}. That is, steps 2 and 3 give us (approximately to error $\mathcal{O}(1/2^a) + \mathcal{O}(\tfrac{\sigma}{\Delta})$) the state:
\begin{equation}\label{eq: post Fourier gate H main body}
\begin{split}
    &\left(\frac{2\pi\sigma}{\Delta 2^{n/m}}\sqrt{\frac{2}{\pi}}\right)^{1/2}\sum_{\tilde{k}^{(j)}={-\infty}}^{\infty}e^{2\pi i y^{(j)}x^{(j)}/2^{n/m}}\\
    &\times e^{-\left(\tfrac{2\pi}{\Delta}\tilde{k}^{(j)}\right)^2\sigma^2}\ket{2^{n/m}\tilde{k}^{(j)}+y^{(j)}, \Delta^\prime}^{gauss}_{O_j}.
\end{split}
\end{equation}Naively, one might think that all we need to do is apply the inverse of the padding gate with the new lattice spacing $\Delta^\prime$, however this turns out to not be the case for two reasons. The first is that the new periodicity index $\tilde{k}^{(j)}$ runs over negative indices as well as positive indices. If our goal is to transfer information about $\tilde{k}^{(j)}$ to the qubits, then we need one ancilla qubit that stores the sign of $\tilde{k}^{(j)}$, otherwise the operation will not be reversible. The second reason is that our qumode wavefunction is now much more spread out in the position quadrature due to the $e^{-\left(\tfrac{2\pi}{\Delta}\tilde{k}^{(j)}\right)^2\sigma^2}$ Gaussian envelope terms. The implication of these Gaussian envelopes is that we need a number of anti-padding qubits $a^\prime$ which depends on the ratio $\tfrac{\Delta}{\sigma}$ so that we may truncate the summation over $\tilde{k}^{(j)}$ to finite order with low error.

The details of the construction of this circuit are given in Appendix~\ref{sec: Hadamard gate derivation}, however we can summarize the steps involved as follows:
\begin{enumerate}
    \item Prepare $a^\prime+1$ ancilla qubits as $\ket{0}^{a^\prime+1}$, where $a^\prime = \mathcal{O}(\mathrm{log}_2(\tfrac{\Delta}{\sigma}))$.
    \item Encode information about the sign of the position of the oscillator $x$ as $f(x) = \tfrac{1}{2}(1 - \mathrm{sgn(x)})$ into the $(a^\prime+1)$th qubit, where sgn$(x)$ is the sign function.
    \item Encode information about $|\tilde{k}^{(j)}|$ into the $a^\prime$ anti-padding qubits.
    \item Disentangle the qumode from the $a^\prime+1$ qubits using a sequence of double-controlled displacement gates.
The final state after applying the anti-padding gate is shown in Appendix~\ref{sec: Hadamard gate derivation} to be of the form:
\begin{equation}
\begin{split}
    \left[\frac{1}{\sqrt{2^{n/m}}}\sum_{y^{(j)}=0}^{2^{n/m}-1}e^{2\pi iy^{(j)}x^{(j)}/2^{n/m}}\ket{y^{(j)},\Delta^\prime}_{O_j}^{gauss}\right]\otimes \ket{\tilde{\psi}_{\text{final}}}_Q
\end{split}
\end{equation}where $\ket{\tilde{\psi}_{final}}$ is an $a^\prime+1$-qubit state that can be discarded or reset. We can recognize the qumode state as the logical action of the qudit Hadamard on a basis state $\ket{x^{(j)},\Delta}_{O_j}^{gauss}$ up to a change in the lattice spacing $\Delta$. The sequence of gates which implements this logical qudit Hadamard on the oscillator is depicted in Fig.~\ref{fig: Hadamard gate circuit main body}.

\begin{figure*}[ht!]
    \centering
    \begin{quantikz}[wire types={q,b}, column sep = 0.3cm]
        \lstick{$\ket{+}^{\otimes a}$}& \gate[2]{U_p^{(j)}(\Delta)} & \gate[]{reset} &\setwiretype{n}& &\lstick{$\ket{0}^{\otimes a^\prime+1}$} \setwiretype{q}&\gate[2]{U_{ap}^{(j)}(\Delta^\prime)} & \gate[]{reset}\\
        \lstick{$\ket{x^{(j)}, \Delta}_{O_j}^{gauss}$}& & \gate[]{D(-\tfrac{2^{n/m+a}}{2}\Delta)} & \gate[]{F} & & & & \rstick{$H^{(2^{n/m})}_j\ket{x^{(j)}, \Delta^\prime}_{O_j}^{gauss}$}\\
        \end{quantikz}
    \caption{The sequence of gate operations used to realize the qudit Hadamard on a qumode.}
    \label{fig: Hadamard gate circuit main body}
\end{figure*}
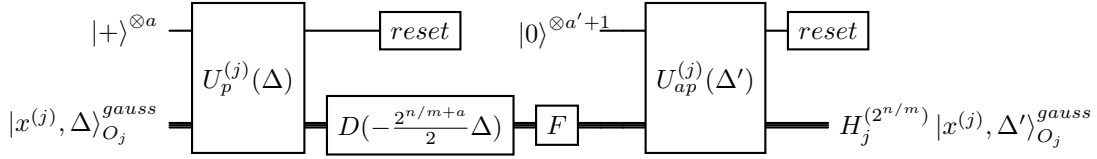

\subsubsection{Controlled rotation gate}

The other qudit gate that we must map onto oscillator gates is the controlled rotation gate $R^{(d)}_k$, whose action on a 2-qudit basis state $\ket{x}\ket{y}$ is given by:
\begin{equation}
    R^{(d)}_k\ket{x}\ket{y} = e^{2\pi i xy/d^k}\ket{x}\ket{y}.
\end{equation}
Note that in the qumode case, qudit basis states are encoded into sharply-peaked Gaussian wavepackets centered about integer multiples of the lattice spacings $\Delta$ and $\Delta^\prime$, where applying a Hadamard gate on one qumode has the effect of transforming this lattice spacing from $\Delta$ to $\Delta^\prime$. Each $j$th layer of the QFT circuit consists of first applying a Hadamard gate to the $j$th oscillator, followed by a sequence of controlled rotation gates on each of the $j^\prime$th qumodes for $j^\prime > j$. Thus, we need a gate on two qumodes which affects the transformation:
\begin{equation}
    R^{d=2^{n/m}}_k\ket{y^{(j)},\tfrac{2\pi}{2^{n/m}\Delta}}^{gauss}_{O_j}\ket{x^{(j^\prime)},\Delta}_{O_{j^\prime}}^{gauss}.
\end{equation}Suppose the width of each of these two wavepackets is small. That is, $\sigma_j,\sigma_{j^\prime}\ll 1$. Then applying the gate $e^{2\pi i\hat{x}_j\hat{x}_{j^\prime}/2^{{nk/m}-1}}$ affects the transformation:
\begin{equation}
\begin{split}
    &e^{2\pi i\hat{x}_j\hat{x}_{j^\prime}/2^{nk/m-1}}\ket{y^{(j)},\tfrac{2\pi}{2^{n/m}\Delta}}^{gauss}_{O_j}\ket{x^{(j^\prime)},\Delta}_{O_{j^\prime}}^{gauss}\\
    &\approx 2^{2\pi iy^{(j)}x^{(j)}/2^{nk/m}}\ket{y^{(j)},\tfrac{2\pi}{2^{n/m}\Delta}}^{gauss}_{O_j}\ket{x^{(j^\prime)},\Delta}_{O_{j^\prime}}^{gauss}.
\end{split}
\end{equation}
\end{enumerate}

\section{Runtime and Bounds on Fidelity}\label{sec: runtime and fidelity bounds}

\subsection{State Transfer Protocols}

In this section we show that a linear increase in the number of qumodes $m$ allows for an exponential improvement in the runtime of state transfer. These results are summarized in Theorem.~\ref{thm: abelian state transfer}.

\begin{theorem}[Parallel state transfer]\label{thm: abelian state transfer}   
    The single-variable abelian D/A state transfer protocol using m qumode and n qubits per qumode achieves a runtime scaling of $\mathcal{O}(2^{n/m}(\Delta + \mathrm{log}(1/\epsilon)))$ with a lower bound of $1-\mathcal{O}(n\epsilon)-me^{-\mathcal{O}(\Delta^2/\sigma^2)}$ on the fidelity, where $\epsilon$ is the error per QSP circuit. The total gate depth is $\mathcal{O}(2^{n/m}\mathrm{log(1/\epsilon}))+\mathcal{O}(n/m)$.

    The D/A state transfer protocol from $n$ qubits to $m$ qumode achieves a runtime scaling of $\mathcal{O}(2^{n/m}\Delta)$ with a fidelity lower bound of:
\begin{equation}
        1 - \sum_{k=1}^m \mathcal{O} \left( \int_{-\infty}^{-\frac{\Delta}{2}(2^{n/m}-1)} dq |\psi_k(q)|^2\\ 
        + \int_{\frac{\Delta}{2}(2^{n/m}-1)}^{\infty} dq |\psi_k(q)|^2 \right) ,
\end{equation}
where $$\psi_k(q) = \frac{1}{\sqrt{\Delta}} \sum_{\mathbf{s_k}} c_{\mathbf{s_k}} \mathrm{sinc} \left( \tfrac{\pi (q-q_{\mathbf{s_k}})}{\Delta} \right)$$
and $\ket{\psi}_Q = \sum_\mathbf{s} c_{\mathbf{s}} |\phi_{\mathbf{s}} \rangle_Q 
= 
\sum_{k=1}^m c_{\mathbf{s_k}} |\phi_{\mathbf{s_k}} \rangle_Q  $. The total gate depth is $\mathcal{O}(2n/m)$.
\end{theorem}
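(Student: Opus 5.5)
The plan is to reduce both parts of Theorem~\ref{thm: abelian state transfer} to the single-qumode results of Ref.~\cite{liu2024mixed}, which were reviewed in Sec.~\ref{sec: single-qumode state transfer} and in the non-abelian subsection. The reduction uses the tensor-product structure established in Sec.~\ref{sec: multimode state transfer}. Since $U_{st}=\bigotimes_{j=1}^m U^{(j)}_{smst}$ acts on disjoint qubit--qumode pairs, the blocks commute and run in parallel. The runtime and depth of $U_{st}$ are therefore those of a single $U_{smst}$ acting on $n/m$ qubits, and the fidelity analysis can be done block by block.

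\textbf{Abelian case: runtime and depth.}
\begin{enumerate}
\item The entangling stage of $U_{smst}$ consists of $n/m$ controlled displacements of sizes $2^{j-1}\Delta$. Their total displacement is $\sum_j 2^{j-1}\Delta=\mathcal{O}(2^{n/m}\Delta)$, and their depth is $\mathcal{O}(n/m)$.
\item The disentangling stage consists of $n/m$ QSP sequences, one for each square wave $S_j$ in Eq.~\eqref{eq: original square wave}, with $n$ replaced by $n/m$.
\item Because $S_j$ has period $2^{n/m-j+1}\Delta$ while the support extends over $2^{n/m}\Delta$, its Fourier frequencies are multiples of $\pi/(2^{n/m-j}\Delta)$.
\item A degree-$\mathcal{O}(\log(1/\epsilon))$ Laurent polynomial approximation of a square wave, away from the transition regions, then requires controlled displacements of total size $\mathcal{O}(2^{n/m-j}\log(1/\epsilon))$ in the conjugate units. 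Summing over $j$ gives $\mathcal{O}(2^{n/m}\log(1/\epsilon))$.
\item Taking the entangling and disentangling stages together yields the claimed runtime $\mathcal{O}(2^{n/m}(\Delta+\log(1/\epsilon)))$ and depth $\mathcal{O}(2^{n/m}\log(1/\epsilon))+\mathcal{O}(n/m)$.
\end{enumerate}

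\textbf{Abelian case: fidelity.} I will bound the operator-norm error of each of the $n$ QSP circuits (there are $m$ blocks of $n/m$ each) by $\epsilon$ on the region where the wavepackets sit. Subadditivity of errors under composition then gives a total of $\mathcal{O}(n\epsilon)$. Separately, each Gaussian wavepacket has tails that reach into the transition regions of the square waves, where the polynomial approximation fails. The weight of these tails is $e^{-\mathcal{O}(\Delta^2/\sigma^2)}$ per qumode, giving $m e^{-\mathcal{O}(\Delta^2/\sigma^2)}$ after a union bound over the $m$ modes. Since the infidelity is bounded by the norm distance, combining the two contributions gives $1-\mathcal{O}(n\epsilon)-me^{-\mathcal{O}(\Delta^2/\sigma^2)}$.

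\textbf{Non-abelian case.} Here I apply the single-mode result quoted earlier to each block, with $n$ replaced by $n/m$.
\begin{enumerate}
\item Each block $U^{na}_{smst}(\Delta;n/m)$ uses $2(n/m)$ exponentials, so its depth is $\mathcal{O}(2n/m)$.
\item The largest conditional displacement in a block is $\mathcal{O}(2^{n/m}\Delta)$, which gives runtime $\mathcal{O}(2^{n/m}\Delta)$.
\item Each block has infidelity equal to the tail weight of $\psi_k$ beyond $\pm\frac{\Delta}{2}(2^{n/m}-1)$.
\end{enumerate}
To combine the $m$ blocks, write the actual output as $\bigotimes_k(\text{ideal}_k+\text{err}_k)$. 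A telescoping (hybrid) argument then shows that the trace distance to the ideal state is at most $\sum_k\|\text{err}_k\|$, so the infidelities add. That gives the stated sum over $k$.

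\textbf{Main obstacle.} The delicate point is making the per-block decomposition legitimate when the input coefficients $c_{\mathbf{s}}$ do not factor across blocks. The input is a superposition over all $\mathbf{s}=(\mathbf{s}_1,\dots,\mathbf{s}_m)$, so the tail integral must be understood via the reduced state on mode $k$, i.e.\ $\psi_k$ built from the marginal amplitudes. I would handle this through the hybrid argument with linearity: the error of block $k$ acts only on the subsystem $(Q_k,O_k)$, and its norm is controlled by the tail weight of the reduced wavefunction on $O_k$. Summing these reduced-state contributions gives the bound. This is the step where I expect the most care to be needed.
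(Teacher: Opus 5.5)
Your proposal is correct and takes the same route as the paper. $U_{st}$ is $m$ parallel copies of the $(n/m)$-qubit single-mode circuit, so runtime and depth are those of one block from Ref.~\cite{liu2024mixed} with $n\to n/m$, and the per-block errors are summed by the triangle inequality. Your hybrid argument, with each tail weight taken from the (generally mixed) reduced state of mode $k$ rather than a ``reduced wavefunction,'' is a more careful version of the paper's one-line step and repairs the statement's loosely written decomposition $\sum_k c_{\mathbf{s}_k}\ket{\phi_{\mathbf{s}_k}}_Q$.

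One slip in your re-derivation in step 4: the degree and the displacement size are swapped. Since $S_j$ has period $2^{n/m-j+1}\Delta$ but must switch within a window of width $\sim\Delta$, its QSP degree is $\mathcal{O}(2^{n/m-j}\log(1/\epsilon))$, and it is this gate count that sums over $j$ to $\mathcal{O}(2^{n/m}\log(1/\epsilon))$. A degree-$\mathcal{O}(\log(1/\epsilon))$ polynomial would give only $\mathcal{O}((n/m)\log(1/\epsilon))$.
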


\begin{proof}

Denote the $n$-qubit, $m$-qumode abelian and non-abelian state transfer circuits by $U_{st}(\Delta; n, m)$ and $U_{st}^{na}(\Delta; n, m)$, respectively. $U_{st}(\Delta; n, m)$ and $U_{st}^{na}(\Delta; n, m)$ consist of $m$ copies of the respective single-qumode abelian and non-abelian state transfer circuits running in parallel. We know from Ref.~\cite{liu2024mixed} that for the $n$-qubit, single-qumode case, the runtime for the abelian case scales as $\mathcal{O}(2^n(\Delta + \mathrm{log(1/\epsilon)}))$, the gate depth scales as $\mathcal{O}(2^n\mathrm{log}(1/\epsilon))$, and the infidelity is upper-bounded by $1-\mathcal{O}(n\epsilon)-e^{-\mathcal{O}(\Delta^2/\sigma^2)}$. Here, $\mathcal{O}(n\epsilon)$ is the error contribution from $n$ sequential applications of QSP circuits with error $\epsilon$ each and the $e^{-\mathcal{O}(\Delta^2/\sigma^2)}$ contribution arises from the non-orthogonality of the gaussian wavepackets. The runtime for the non-abelian case has a runtime which scales as $\mathcal{O}(2^n\Delta)$, a gate depth which scales as $\mathcal{O}(n)$, and an infidelity upper bound of:
\begin{equation*}
    1 - \mathcal{O}\left(\int_{-\infty}^{-\tfrac{\Delta}{2}(2^n-1)}dq|\psi(q)|^2 + \int_{\tfrac{\Delta}{2}(2^n-1)}^{\infty}dq|\psi(q)|^2\right).
\end{equation*}Theorem ~\ref{thm: abelian state transfer} is a direct consequence of an application of the triangle inequality to the case where we have $m$ copies of these subroutines running in parallel with $n/m$ qubits per qumode. That is, suppose the errors of the abelian and non-abelian state transfer circuits with $n/m$ qubits and one qumode are given by $\epsilon_{sm,a}$ and $\epsilon_{sm,na}$, respectively. It follows from the triangle inequality that upper bounds on the errors of the $n$-qubit, $m$-qumode (with $n/m$ qubits per qumode) state transfer procedures are given by $\epsilon_{a}\leq m\epsilon_{sm,a}$ and $\epsilon_{na}\leq m\epsilon_{sm,na}$, respectively.
\end{proof}

\subsection{The Quantum Fourier Transform on Qumodes}

In what follows, we formulate a lower bound on the fidelity of the multi-qumode QFT using abelian state transfer by evaluating the error contribution from each step individually, then using the triangle inequality to sum up all these contributions. This is summarized in Theorem~\ref{thm: QFT theorem}.

\begin{theorem}[QFT with parallel abelian state transfer]\label{thm: QFT theorem}
    The QFT circuit on $n$ qubits using m qumodes and $n/m$ qubits per qumode can be implemented with runtime $\mathcal{O}(m2^{n/m+\max(a,a^\prime)}(\Delta+\mathrm{log}(1/\epsilon)))$ + $\mathcal{O}(m^2)$ with a fidelity lower bound of $1-\mathcal{O}(m/2^a)-\mathcal{O}(m\tfrac{\sigma}{\Delta})-\mathcal{O}((n+ma+2ma^\prime)\epsilon)-\mathcal{O}(m\cdot\mathrm{erfc}(\tfrac{\Delta}{\sigma}2^{a^\prime}))$, where $\epsilon$ is the QSP polynomial approximation error.
\end{theorem}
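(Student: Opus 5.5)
The plan is to treat the qumode QFT circuit of Fig.~\ref{fig:multimode QFT circuit} as a composition of four kinds of blocks: the forward state transfer $U_{st}(\Delta)$, $m$ qudit Hadamards (each built from $U_p^{(j)}(\Delta)$, a displacement, the free evolution $F_j$, and $U_{ap}^{(j)}(\Delta^\prime)$), $\mathcal{O}(m^2)$ two-mode controlled-rotation gates $e^{2\pi i\hat{x}_j\hat{x}_{j^\prime}/2^{nk/m-1}}$, and the inverse transfer $U^\dagger_{st}(\Delta^\prime)$. Runtime and error are then bounded block by block. Errors are combined with the triangle inequality: if each block $V_i$ is $\delta_i$-close in operator norm to its ideal $\tilde V_i$ on the relevant input subspace, then $\|\prod V_i-\prod\tilde V_i\|\le\sum_i\delta_i$. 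The fidelity lower bound is $1$ minus a constant times this sum.

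\textbf{Runtime.} By Theorem~\ref{thm: abelian state transfer}, the two state transfers cost $\mathcal{O}(2^{n/m}(\Delta+\log(1/\epsilon)))$ each.

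For the padding gate, $U_p^{(j)}$ is a single-mode transfer on an effective $(n/m+a)$-bit lattice. Its largest controlled displacement is $2^{n/m+a-1}\Delta$, and its square-wave QSP polynomials have degree $\mathcal{O}(2^{n/m+a}\log(1/\epsilon))$. So it costs $\mathcal{O}(2^{n/m+a}(\Delta+\log(1/\epsilon)))$, by the same counting as in Ref.~\cite{liu2024mixed}.

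The anti-padding gate works on $a^\prime+1$ qubits. Its costs are the sign-function QSP, the $a^\prime$ square-wave QSPs on $|\tilde k|$, and the double-controlled displacements. Together these give $\mathcal{O}(2^{n/m+a^\prime}(\Delta+\log(1/\epsilon)))$. The displacement $D$ and $F_j$ contribute only lower-order cost.

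Each Hadamard therefore costs $\mathcal{O}(2^{n/m+\max(a,a^\prime)}(\Delta+\log(1/\epsilon)))$. Applying $m$ of them sequentially, as the QFT ordering requires, gives the first term of the claimed runtime. Each controlled rotation is a single constant-time $\hat{x}\hat{x}$ interaction, and there are $m(m-1)/2$ of them, which gives the $\mathcal{O}(m^2)$ term.

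\textbf{Error budget.} I would itemize the error contributions as follows:
\begin{itemize}
\item[(i)] The two state transfers contribute $\mathcal{O}(n\epsilon)$ from QSP, plus $m e^{-\mathcal{O}(\Delta^2/\sigma^2)}$, which is dominated by $\mathcal{O}(m\sigma/\Delta)$.
\item[(ii)] Each padding gate contributes $a$ QSP sequences, for a total of $\mathcal{O}(ma\epsilon)$.
\item[(iii)] Each free-evolution step yields the Fourier phases only up to the error $\mathcal{O}(1/2^a)+\mathcal{O}(\sigma/\Delta)$ from Eq.~\eqref{eq: post Fourier gate H main body}, imported from Ref.~\cite{liu2024mixed}. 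Over $m$ modes this gives $\mathcal{O}(m/2^a)+\mathcal{O}(m\sigma/\Delta)$.
\item[(iv)] Each anti-padding gate uses $\mathcal{O}(a^\prime)$ QSP circuits. I count them as $2a^\prime$: one for each magnitude bit plus one per bit for the sign-controlled branch. This gives $\mathcal{O}(2ma^\prime\epsilon)$.
\item[(v)] Truncating the sum over $\tilde k^{(j)}$ to $|\tilde k^{(j)}|<2^{a^\prime}$ discards the tail weight of the Gaussian envelope $e^{-2(2\pi\tilde k\sigma/\Delta)^2}$.
\item[(vi)] The controlled rotations are exact up to wavepacket-width corrections. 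These are absorbed into the $\mathcal{O}(\sigma/\Delta)$ terms, because the phase varies by $\mathcal{O}(\sigma\Delta)$ across each peak.
\end{itemize}
Summing (i) through (vi) with the triangle inequality gives the stated bound. Subtleties such as the non-periodicity of the qumodes are handled by ensuring that the controlled rotations act only after the anti-padding has removed the periodic replicas.

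\textbf{Main obstacle.} I expect item (v) to be the hardest step. I must show that, after $F_j$, the state is a Gaussian-weighted lattice with envelope width $\sim\Delta/\sigma$ in the index $\tilde k$. The discarded normalized weight $\sum_{|\tilde k|\ge 2^{a^\prime}}e^{-c\tilde k^2\sigma^2/\Delta^2}/\sum_{\tilde k}(\cdot)$ then needs to be bounded by an integral-comparison (Riemann sum) argument. Controlling it by $\mathrm{erfc}$ of the truncation point, written as $\mathrm{erfc}(\tfrac{\Delta}{\sigma}2^{a^\prime})$ up to constant rescaling of the argument as in the statement, requires care with the normalization prefactor $\bigl(\tfrac{2\pi\sigma}{\Delta2^{n/m}}\sqrt{2/\pi}\bigr)^{1/2}$. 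My first attempt would bound the tail sum by $\int_{2^{a^\prime}-1}^\infty$ of the envelope and compare it with the full sum, approximated by $\sqrt{\pi}\,\Delta/(\sigma\sqrt{c})$. This yields one $\mathrm{erfc}$ term per mode, hence $\mathcal{O}(m\,\mathrm{erfc}(\cdot))$ overall.
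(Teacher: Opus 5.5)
\textbf{Overall approach.} Your decomposition and bookkeeping match the paper's proof: $m$ sequential padding gates with $a$ QSP sequences each, $m$ anti-padding gates with $2a'$ controlled QSP sequences each, and $\mathcal{O}(m^2)$ constant-time controlled rotations, with all contributions summed by the triangle inequality. Your itemization is actually more complete than the written proof. The paper never explicitly accounts for the $\mathcal{O}(n\epsilon)$ from the two state transfers or the $\mathcal{O}(m/2^a)$ from the free evolution; those terms enter only through the statement and Appendix~\ref{sec: Hadamard gate error bounds}.

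\textbf{The gap in (v).} The step that fails is the end of (v). By your own observation, the envelope $e^{-(2\pi\sigma\tilde k/\Delta)^2}$ has width $\sim\Delta/\sigma$ in $\tilde k$. The Riemann-sum comparison therefore gives a tail of order $\mathrm{erfc}\bigl(\Theta(\tfrac{\sigma}{\Delta}2^{a'})\bigr)$, which is exactly what the paper's appendix obtains, namely $2\,\mathrm{erfc}(\sqrt{2}\pi\tfrac{\sigma}{\Delta}2^{a'})$.

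That is not a constant rescaling of $\mathrm{erfc}(\tfrac{\Delta}{\sigma}2^{a'})$. The two arguments differ by the factor $(\Delta/\sigma)^2\gg 1$. Since $\mathrm{erfc}$ is decreasing, the printed term would assert a negligible truncation error even when $2^{a'}\ll\Delta/\sigma$. In that regime, however, most of the envelope weight lies outside the retained window.

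The argument in the statement is inverted. The paper's own requirement $2^{a'}\gtrsim\Delta/\sigma$, i.e.\ $a'=\mathcal{O}(\log_2(\Delta/\sigma))$, confirms this. You should prove and state the bound with $\tfrac{\sigma}{\Delta}2^{a'}$, rather than claim to recover the printed form.

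A constant rescaling of the argument does legitimately enter elsewhere. The appendix bounds a squared norm, so under your operator-norm triangle inequality that term contributes its square root. The square root is again $\mathcal{O}(\mathrm{erfc})$ with the argument rescaled by a constant.

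\textbf{A smaller issue in (vi).} There are $\Theta(m^2)$ controlled rotations, so per-gate width errors cannot simply be absorbed into an $\mathcal{O}(m\sigma/\Delta)$ term. Also, $\sigma\Delta$ is not $\mathcal{O}(\sigma/\Delta)$ in general. You have two options:
\begin{enumerate}
\item Adopt the paper's explicit assumption that these gates are error-free for sharply peaked wavepackets.
\item Use the fact that the coupling strengths decrease geometrically, by a factor $2^{-n/m}$, in $j'-j$. Each row of rotations then contributes only a constant times its nearest-neighbour error, giving $\mathcal{O}(m)$ rather than $\mathcal{O}(m^2)$ such terms.
\end{enumerate}
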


\begin{proof}
The $m$-qumode QFT protocol utilizes $m$ instances of the padding gate carried out sequentially. The padding gate is a modified version of the abelian single-qumode state transfer procedure. Thus, the runtime and fidelity analysis of Ref.~\cite{liu2024mixed} carries over in a directly analogous way. That is, each padding gate sequence consists of $a$ QSP sequences, each with error $\epsilon$ and requires a maximum displacement of $2^{n/m+a-1}\Delta$. Therefore, each padding gate contributes runtime complexity $\mathcal{O}(2^{n/m+a}(\Delta+\mathrm{log}(1/\epsilon)))$ and error $\mathcal{O}(a\epsilon) + e^{-\mathcal{O}(\Delta^2/\sigma^a)}$. There are $m$ padding gates, thus the total runtime contribution from padding is $\mathcal{O}(m2^{n/m+a}(\Delta+\mathrm{log}(1/\epsilon)))$ with error $\mathcal{O}(ma\epsilon)+me^{-\mathcal{O}(\Delta^2/\sigma^2)}$.

We also require $m$ instances of the anti-padding gate carried out sequentially. Each anti-padding gate sequence requires $a^\prime$ anti-padding qubits, a maximum displacement of $2^{n/m+a^\prime-1}\Delta^\prime$ (where $\Delta^\prime = \tfrac{2\pi}{2^{n/m}\Delta}$), $2a^\prime$ controlled QSP sequences of error $\epsilon$ each, and one QSP sequence approximating a step function at the origin of the qumode position. We assume the one step function QSP circuit to have negligible runtime and infidelity compared to the aggregate contribution of the $2a^\prime$ controlled QSP sequences. Each of the $2a^\prime$ controlled QSP sequences implements the same square-wave polynomials as the forward padding gate, but with $a^\prime \neq a$ qubits. Therefore, the runtime complexity of these QSP sequences is $\mathcal{O}(2^{n/m+a^\prime}(\Delta+\mathrm{log}(1/\epsilon)))$ with infidelity $\mathcal{O}(a^\prime\epsilon)+\mathcal{O}(\tfrac{\Delta}{\sigma})+\mathcal{O}(\mathrm{erfc}(\tfrac{\Delta}{\sigma}2^{a^\prime}))+e^{-\mathcal{O}(\Delta^2/\sigma^2)}$. Here, we have assumed $\Delta^\prime < \Delta$ for simplicity. The $\mathcal{O}(\mathrm{erfc}(\tfrac{\Delta}{\sigma}2^{a^\prime}))$ comes from the fact that the gaussian wavepackets in Eq.~\eqref{eq: post Fourier gate H main body} extend out to infinity, yet we only use a finite number of $a^\prime$ anti-padding qubits to remove the periodicity from the qumode. The $\mathcal{O}(\sigma/\Delta)$ contribution is a result from Ref.~\cite{liu2024mixed} that arises from the qubits only being perfectly disentangled from the qumode in the inverse state transfer step in the limit of small $\sigma/\Delta$. The same approximation is used in the current work for the anti-padding step. 

The circuit also involves the application of $\mathcal{O}(m^2)$ controlled rotation gates, each of which is assumed to be implemented with negligible error (assuming $\sigma\ll 1)$ and constant runtime. The runtime contribution from these gates is therefore $\mathcal{O}(m^2)$.

The aggregate of all these runtime and infidelity contributions gives us the scaling in Theorem~\ref{thm: QFT theorem}.

\end{proof}

\section{State transfer can suppress position displacement errors}\label{sec: error suppression}

Here we show that the state transfer procedure, together with its inverse, can correct position displacement errors on an oscillator, provided that the state transfer process itself is noise-free and is implemented with high fidelity. We emphasize that because we currently ignore errors that occur during the state transfer, we have presently not demonstrated this to be a fully fault-tolerant error correction scheme. Nevertheless, one may view this as an interesting example of how using qubits to effectively discretize the continuous set of errors that can occur on a qumode can enable the suppression of such errors. We illustrate this with the simple example of $n$ qubits and one qumode, however we note that this can be readily applied to the multi-qumode state transfer method.

We begin with the state:
\begin{equation}
    \ket{\psi}_O\ket{0}^{\otimes n}_Q = \sum_{x=0}^{2^n-1}C_x\ket{\Delta x}^{gauss}_O\ket{0}^{\otimes n}_Q
\end{equation}where $\ket{\Delta x}^{gauss}_O$ is a Gaussian wavepacket with mean $\Delta x$, with $x$ being an integer and $\Delta$ being any positive real number. Note the slight difference in notation from the previous sections where this same state was denoted as $\ket{x, \Delta}_O^{gauss}$. Suppose a position displacement error of $\delta$ occurs on the qumode, producing the state:
\begin{equation}
\begin{split}
    D(\delta)\ket{\psi}_O\ket{0}^{\otimes n}_Q &= \sum_{x=0}^{2^n-1}C_x\ket{\Delta x + \delta}^{gauss}_O\ket{0}^{\otimes n}_Q\\
    &= \sum_{x=0}^{2^n-1}C_x\ket{\Delta(x+\tfrac{\delta}{\Delta})}^{gauss}_O\ket{0}^{\otimes n}_Q.
\end{split}
\end{equation}The net effect of the displacement error is to shift every integer $x$ in our lattice of Gaussian wavepackets by $\tfrac{\delta}{\Delta}$, keeping the lattice spacing $\Delta$ the same. The first step of the A/D state transfer procedure is to apply a sequence of $n$ hybrid QSP circuits, where the $j$th QSP sequence flips the $j$th qubit conditioned on the position of the oscillator. This is implemented by encoding a square wave pulse:
\begin{equation}
    S_j(\hat{x})= \Theta\left(\mathrm{cos}\left[\tfrac{\pi}{2^{n-j}}(\tfrac{\hat{x}}{\Delta}-2^{n-j-1}+\tfrac{1}{2})\right]\right)
\end{equation}into the diagonal elements of the SU(2) representation of the $j$th QSP sequence:
\begin{equation}
    R_j(\hat{x}) = \begin{pmatrix}
        S_j(\hat{x}) & \sqrt{1-S_j(\hat{x}})\\
        -\sqrt{1-S_j(\hat{x})} & S_j(\hat{x})
    \end{pmatrix}.
\end{equation}Note that among these $n$ square wave pulse sequences, the minimum period is $2\Delta$. Therefore, for all $x=0,1,\dots,2^{n-1}$ and $j=1,2,\dots,n$, we have that $S_j(x \pm \delta^\prime) = S_j(x)$ for $|\delta^\prime| < \tfrac{\Delta}{2}$. Thus, provided that each QSP sequences is implemented with negligible error and assuming that the variance of each Gaussian wavepacket is much smaller than the lattice spacing, \emph{i.e.} $\sigma \ll \tfrac{\Delta}{2}$, the QSP sequences have the following action for $\tfrac{\delta}{\Delta} < \tfrac{1}{2}$:
\begin{equation}\prod_{j=1}^nR_j(\hat{x})\ket{\Delta(x+\tfrac{\delta}{\Delta})}_O^{gauss}\ket{0}^{\otimes n}_Q = \ket{\Delta(x+\tfrac{\delta}{\Delta})}_O^{gauss}\ket{x}_Q.
\end{equation}That is, no information about the displacement is transferred to the qubits during the A/D state transfer.

The next step of the A/D state transfer procedure is to apply a sequence of $n$ conditional displacements to the qumode, where the $j$th conditional displacement displaces the qumode by $-2^{n-j}\Delta$ conditioned on the $j$th bit of $x$ being $1$. This has the net effect:
\begin{equation}
\begin{split}
    \prod_{j=1}^ncD_j(-2^{n-j}\Delta)\ket{\Delta(x+\tfrac{\delta}{\Delta})}_O^{gauss}\ket{x}_Q = \ket{\delta}_O^{gauss}\ket{x}_Q.
\end{split}
\end{equation}Therefore, the action of the A/D state transfer circuit on the displaced state is given by:
\begin{equation}
    U_{A/D}(\Delta)D(\delta)\ket{\psi}_O\ket{0}^{\otimes n}_Q = \ket{\delta}_O^{gauss}\otimes\sum_{x=0}^{2^n-1}C_x\ket{x}_Q.
\end{equation}Provided that we have the ability to reset the qumode to $\ket{0}^{gauss}_O$, the application of this reset followed by the D/A state transfer circuit, we will have the state:
\begin{equation}
    U_{D/A}G_{reset}U_{A/D}(\Delta)D(\delta)\ket{\psi}_O\ket{0}^{\otimes n}_Q = \ket{\psi}_O\ket{0}^{\otimes n}_Q.
\end{equation}Thus, the state transfer procedure and its inverse can suppress some position displacement errors. We give an example of how this can be used to correct displacement errors in cat states in Appendix~\ref{sec: correct cat state}

\section{Exponential Suppression of photon loss rate}\label{sec: photon loss rate}

We can also make statements about how the instantaneous time rate rate of change of photon loss between the abelian single-mode sequential state transfer and multi-mode parallel state transfer protocols compare. We begin with the following theorem, then procede with the proof.

\begin{theorem}[Photon loss rate upper bound]\label{thm: photon loss}
    The ratio of the maximum instantaneous time rate of change of $\braket{\hat{n}}$ of multi-qumode state transfer to that of single-qumode state transfer is upper bounded by $\mathcal{O}(m2^{2n(\tfrac{1}{m}-1)})$, where $n$ and $m$ are the total number of qubits and qumodes, respectively. The corresponding ratio for the average total number of photons lost is $\mathcal{O}(m/(1 - e^{-\gamma 2^{n}(\Delta + \mathrm{log}(1/\epsilon))}))$.
\end{theorem}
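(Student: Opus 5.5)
We model photon loss by a Lindblad master equation with loss rate $\gamma$ on each qumode, $\dot\rho = \gamma\sum_k\bigl(\hat a_k\rho\hat a_k^\dagger - \tfrac12\{\hat a_k^\dagger\hat a_k,\rho\}\bigr)$. Then $\tfrac{d}{dt}\braket{\hat n_k} = -\gamma\braket{\hat n_k}$ for each mode, up to the coherent part of the Hamiltonian, which only redistributes energy. So the instantaneous loss rate of a register is $\gamma\sum_k\braket{\hat n_k}$. Its maximum over the protocol is attained at the point of largest photon number. We take this to be the end of the controlled-displacement (entangling) stage, where the wavepackets sit farthest from the origin.

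For a Gaussian wavepacket centred at position $x\Delta$ with width $\sigma$ we have $\braket{\hat n} = \tfrac12(\braket{\hat x^2}+\braket{\hat p^2}) - \tfrac12 = \mathcal{O}(x^2\Delta^2) + \mathcal{O}(\sigma^2+1/\sigma^2)$. In the single-qumode protocol $x$ ranges up to $2^n-1$, so the worst case, and hence the maximum instantaneous rate, scales as $\gamma\,\Theta(2^{2n}\Delta^2)$. In the multi-qumode protocol each mode carries $x^{(j)}\le 2^{n/m}-1$, so each contributes at most $\mathcal{O}(2^{2n/m}\Delta^2)$. Summing over $m$ modes gives $\mathcal{O}(m2^{2n/m}\Delta^2)$. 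The constant vacuum and width terms are absorbed, assuming $\Delta 2^{n/m}\gg \sigma,1/\sigma$. Taking the ratio gives $\mathcal{O}(m2^{2n/m-2n}) = \mathcal{O}(m2^{2n(\frac1m-1)})$, which is the first claim. The comparison is worst case against worst case: for the single mode we use a lower bound on its maximum, namely the basis state $x=2^n-1$; for the multi-mode register we use an upper bound. This keeps the inequality in the right direction.

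For the total number of photons lost we integrate $\dot{\braket{\hat n}} = -\gamma\braket{\hat n}$ over the runtime $T$ of each protocol. From Theorem~\ref{thm: abelian state transfer} and Ref.~\cite{liu2024mixed}, $T_{\rm single}=\mathcal{O}(2^n(\Delta+\log(1/\epsilon)))$ and $T_{\rm multi}=\mathcal{O}(2^{n/m}(\Delta+\log(1/\epsilon)))$. The photons lost from an initial occupation $N_0$ are $N_0(1-e^{-\gamma T})$. We bound each mode in the multi-qumode case by its peak occupation times $1-e^{-\gamma T_{\rm multi}}\le 1$. For the single-mode case we keep the factor $1-e^{-\gamma T_{\rm single}}$ explicitly. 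The ratio then becomes $\mathcal{O}\bigl(m\,2^{2n/m}(1-e^{-\gamma T_{\rm multi}})/[2^{2n}(1-e^{-\gamma T_{\rm single}})]\bigr)$. Dropping the photon-number prefactor ratio, which is at most $\mathcal{O}(1)$ once normalized, and bounding the multi-mode numerator by $1$ yields $\mathcal{O}\bigl(m/(1-e^{-\gamma 2^n(\Delta+\log(1/\epsilon))})\bigr)$, as stated.

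The main obstacle is that the occupation is not constant during the protocol. Displacements ramp $\braket{\hat n}$ up while loss drains it, so "photons lost" is really $\int_0^T\gamma\braket{\hat n(t)}\,dt$. My plan is to bound $\braket{\hat n(t)}$ above by its peak value for the multi-mode register. For the single-mode register I would lower-bound it by the exponential decay from the peak over the remaining runtime. Together these give exactly the $1-e^{-\gamma T}$ factors, at the price of constants that are absorbed into $\mathcal{O}(\cdot)$. A second subtlety is that the maximum over superposition states must be taken consistently. I handle this by stating both bounds for the worst-case basis state, since $\braket{\hat n}$ is linear in $\rho$ and the Gaussian wavepackets are approximately orthogonal up to $e^{-\mathcal{O}(\Delta^2/\sigma^2)}$ corrections.
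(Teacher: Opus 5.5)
Your proposal follows essentially the same route as the paper: loss rate $\gamma\braket{\hat{n}}$, peak occupations $\Theta(2^{2n}\Delta^2)$ versus $\mathcal{O}(m2^{2n/m}\Delta^2)$, and pure exponential decay $N_0(1-e^{-\gamma T})$ with the runtimes from Theorem~\ref{thm: abelian state transfer}, bounding the multi-mode factor $1-e^{-\gamma T_{\mathrm{multi}}}$ by $1$. You are in fact slightly more careful than the paper, which silently drops the prefactor $2^{2n/m-2n}\le 1$ and never notes that the single-mode quantity must be bounded from below. The one slip is in your final paragraph: bounding the multi-mode $\braket{\hat{n}(t)}$ by its constant peak value gives $\gamma T N_{\mathrm{peak}}$, which is not within a constant of $N_{\mathrm{peak}}(1-e^{-\gamma T})$ when $\gamma T\gg 1$, so you should integrate the decaying envelope $N_{\mathrm{peak}}e^{-\gamma t}$ instead, exactly as your third paragraph and the paper already do.
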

\begin{proof}
    We know that at any given point in time, a state with average photon number $\braket{\hat{n}}$ will lose photons at a rate:
\begin{equation}\label{eq: photon loss rate}
    \frac{d}{dt}\braket{\hat{n}} = -\gamma\braket{\hat{n}}
\end{equation}
for some $\gamma > 0$. For simplicity we assume that the amount of squeezing between the single and multi-qumode state transfer procedures is the same and neglect the contribution from squeezing to the average photon number. The average photon number for a state $\ket{x,\Delta}_{O}^{gauss}$ is given by $|x\Delta|^2$. For the abelian single-qumode state transfer, the maximum displacement used is $\mathcal{O}(2^n\Delta)$. For the abelian multi-qumode state transfer, the maximum displacement for any single qumode is $\mathcal{O}(2^{n/m}\Delta)$. There are $m$ such qumodes. An upper bound on the average number of photons for the single-qumode case is therefore $\braket{\hat{n}}_{sm} = \mathcal{O}(2^{2n}\Delta^2).$ The corresponding average photon number for the multi-qumode case is $\braket{\hat{n}}_{mm} = \mathcal{O}(m2^{2n/m}\Delta^2)$. The ratio of these two quantities is given by:
\begin{equation*}
    \frac{\braket{\hat{n}}_{mm}}{\braket{\hat{n}}_{sm}} = \frac{\mathcal{O}(m2^{2n/m}\Delta^2)}{\mathcal{O}(2^{2n}\Delta^2)} = \mathcal{O}(m2^{2n(\tfrac{1}{m}-1)}).
\end{equation*}We can also make a statement about the average total number of photons lost between the sequential and parallel state transfer methods. Eq.~\eqref{eq: photon loss rate} is solved by the expression:
\begin{equation}
    \braket{\hat{n}}_t = \braket{\hat{n}}_{t=0}e^{-\gamma t}
\end{equation}Here, we define time $t=0$ to be the instant of time immediately following the first conditional displacement gate as this corresponds to when the qumodes for each of the respective methods have half of their maximum photon counts. $\braket{\hat{n}}_{t=0,mm} = \mathcal{O}(m2^{2n/m}\Delta^2)$ and $\braket{\hat{n}}_{t=0,sm} = \mathcal{O}(2^{2n}\Delta^2)$ for the single and multi-qumode cases, respectively. We also know from Ref.~\cite{liu2024mixed} and from Theorem~\ref{thm: abelian state transfer} that the runtimes of the sequential and parallel cases are given by $\mathcal{O}(2^n(\Delta + \mathrm{log}(1/\epsilon)))$ and $\mathcal{O}(2^{n/m}(\Delta + \mathrm{log}(1/\epsilon)))$, respectively. Therefore, the average total number of photons lost by the multi-qumode state transfer is $\mathcal{O}(m\Delta^2(1 - e^{-\gamma 2^{n/m}(\Delta + \mathrm{log}(1/\epsilon))}))$ and the total number of photons by the sequential state transfer is $\mathcal{O}(\Delta^2(1 - e^{-\gamma 2^n(\Delta + \mathrm{log}(1/\epsilon))}))$. The ratio of these two quantities is given by:
\begin{equation}
\begin{split}
    \frac{\braket{\hat{n}}_{t,mm}}{\braket{\hat{n}}_{t,sm}} &= \mathcal{O}\left(m\frac{1 - e^{-\gamma 2^{n/m}(\Delta + \mathrm{log}(1/\epsilon))}}{1 - e^{-\gamma 2^n(\Delta + \mathrm{log}(1/\epsilon))}}\right)\\
    &=\mathcal{O}(m/(1 - e^{-\gamma 2^{n}(\Delta + \mathrm{log}(1/\epsilon))})).
\end{split}
\end{equation}
\end{proof}We can see that in the limit of $2^n\gg\gamma$, the total average photon loss suppression is approximately linear in $m$, which we present as a corollary.
\begin{corollary}
    The ratio of the average photon loss of multi-qumode state transfer to that of single-qumode state transfer is approximately $\mathcal{O}(m)$ in the limit of $2^n\gg\gamma$.
\end{corollary}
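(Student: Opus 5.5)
The corollary follows by taking a limit in the second statement of Theorem~\ref{thm: photon loss}. There the ratio of average total photons lost was shown to be
\begin{equation*}
\frac{\braket{\hat{n}}_{t,mm}}{\braket{\hat{n}}_{t,sm}} = \mathcal{O}\left(m\,\frac{1-e^{-\gamma 2^{n/m}(\Delta+\mathrm{log}(1/\epsilon))}}{1-e^{-\gamma 2^{n}(\Delta+\mathrm{log}(1/\epsilon))}}\right).
\end{equation*}
I will control the denominator and the numerator separately. I read the hypothesis ``$2^n\gg\gamma$'' as the regime in which the exponents are large, i.e.\ $\gamma 2^{n}(\Delta+\mathrm{log}(1/\epsilon))\gg 1$ (and likewise with $2^{n/m}$). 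The limit is taken with $\Delta$, $\epsilon$ and $\gamma$ held fixed.

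\textbf{Denominator.} Set $T_{sm}=2^n(\Delta+\mathrm{log}(1/\epsilon))$. In the stated limit $\gamma T_{sm}\to\infty$, so $1-e^{-\gamma T_{sm}} = 1 - o(1)$. Concretely, once $\gamma T_{sm}\geq 1$ it is bounded below by the constant $1-e^{-1}$. The denominator is therefore $\Theta(1)$, and the ratio is $\mathcal{O}(m)$ times the numerator factor.

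\textbf{Numerator.} The factor $1-e^{-\gamma T_{mm}}$, with $T_{mm}=2^{n/m}(\Delta+\mathrm{log}(1/\epsilon))$, lies in $[0,1]$. So the upper bound $\mathcal{O}(m)$ follows immediately, and this is exactly the simplification the paper already made in the last line of the proof of Theorem~\ref{thm: photon loss}.

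\textbf{Obstacle: the word ``approximately''.} Claiming the ratio is $\Theta(m)$ rather than merely $\mathcal{O}(m)$ needs the numerator to be bounded away from zero. That holds only if the parallel runtime is also long compared with $1/\gamma$, i.e.\ $\gamma 2^{n/m}(\Delta+\mathrm{log}(1/\epsilon))\gtrsim 1$. This condition does not follow from $2^n\gg\gamma$ alone, since $n/m$ may be held fixed while $n$ grows.

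I would handle this in one of two ways. The first option is to state the corollary as an upper bound $\mathcal{O}(m)$, which is fully justified by the two steps above. The second is to add the mild assumption that the per-mode runtime exceeds the loss time. Under that assumption both exponentials are negligible, the ratio tends to $m$ times the ratio of prefactors, and the prefactors were already normalized to $m\Delta^2$ versus $\Delta^2$ in the proof of Theorem~\ref{thm: photon loss}. In the opposite regime, $\gamma T_{mm}\ll 1$, I would expand the numerator to first order as $\gamma T_{mm}$. This gives the sharper ratio $\mathcal{O}(m\gamma 2^{n/m}(\Delta+\mathrm{log}(1/\epsilon)))$, which is still at most $\mathcal{O}(m)$ and so consistent with the corollary.
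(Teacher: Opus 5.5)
Your argument is essentially the paper's: you start from the last line of the proof of Theorem~\ref{thm: photon loss}, where the numerator has already been bounded by one, and let the denominator $1-e^{-\gamma 2^{n}(\Delta+\mathrm{log}(1/\epsilon))}$ tend to one; your reading of ``$2^n\gg\gamma$'' as $\gamma 2^{n}(\Delta+\mathrm{log}(1/\epsilon))\gg 1$ is the condition the paper implicitly needs. Your caveat that only the upper bound $\mathcal{O}(m)$ is robust holds even more strongly than you say: the $m\Delta^2$ versus $\Delta^2$ prefactors you rely on for the $\Theta(m)$ option come from a step in the theorem's proof that drops the factors $2^{2n/m}$ and $2^{2n}$ in $\langle\hat{n}\rangle_{t=0}$, and restoring them multiplies the ratio by $2^{2n(1/m-1)}\leq 1$, so for $m>1$ the ratio is not $\Theta(m)$ even under your added assumption.
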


\section{Numerical demonstration of abelian state transfer}\label{sec: state transfer numerical simulation}

In this section we present a simple numerical demonstration of the abelian state transfer procedure involving two qubits and two qumodes using QuTiP.~\cite{qutip5} Our initial 2-qubit state is a bell state and our initial 2-qumode state is a tensor product of two squeezed vacuum states. This state is given by:
\begin{equation}
    \ket{\psi_{init}} = \frac{1}{\sqrt{2}}\left[\ket{00}_Q+\ket{11}_Q\right]\otimes\ket{0,\Delta}^{gauss}_{O_1}\ket{0,\Delta}_{O_2}^{gauss}.
\end{equation}We choose $\Delta=1$, a squeezing parameter of $1.2$, and a Fock level cutoff of $80$. The target state we wish to prepare is given by:
\begin{equation*}
\begin{split}
    &\ket{\psi_{targ}} =\ket{00}_Q\\
    &\otimes\frac{1}{\sqrt{2}\mathcal{N}}\left[\ket{0,\Delta}_{O_1}^{gauss}\ket{0,\Delta}^{gauss}_{O_2}
    +\ket{1,\Delta}_{O_1}^{gauss}\ket{1,\Delta}_{O_2}^{gauss}\right].
\end{split}
\end{equation*}We measure the fidelity of the protocol as the fidelity between the reduced density matrix of the qumode subsystem with the state:
\begin{equation}
\begin{split}
    \ket{\psi_{targ,osc}} = \frac{1}{\sqrt{2}\mathcal{N}}&[\ket{0,\Delta}_{O_1}^{gauss}\ket{0,\Delta}^{gauss}_{O_2}\\
    +&\ket{1,\Delta}_{O_1}^{gauss}\ket{1,\Delta}_{O_2}^{gauss}].
\end{split}
\label{eq:target-osc}
\end{equation}We report that the fidelity of the method using the aforementioned parameters is $\mathcal{F}\approx0.9993$, validating the method. Additionally, we provide a visual demonstration of the method by plotting the Wigner functions of the following two single-qumode density matrices:
\begin{align}
    &\rho_{2}^{(1)} = \frac{1}{\mathcal{N}_1}\mathrm{Tr}\left(\ket{0,\Delta}\bra{0,\Delta}_{O_1}^{gauss}\otimes \mathbb{I}_{O_2}\ket{\psi_{osc}}\bra{\psi_{osc}}\right)\\
    &\rho_{2}^{(2)} = \frac{1}{\mathcal{N}_2}\mathrm{Tr}\left(\ket{1,\Delta}\bra{1,\Delta}_{O_1}^{gauss}\otimes \mathbb{I}_{O_2}\ket{\psi_{osc}}\bra{\psi_{osc}}\right)
\end{align}where $\ket{\psi_{osc}}$ is the state that we prepare which is intended to have high fidelity with $\ket{\psi_{targ,osc}}$. We expect to see wavepackets centered at $x = 0$ and $x=\Delta$ for $\rho_{2}^{(1)}$ and $\rho_{2}^{(1)}$, respectively. 
\begin{figure*}[htp!]
    \centering
    \begin{subfigure}[b]{0.45\textwidth}
        \includegraphics[width=\linewidth]{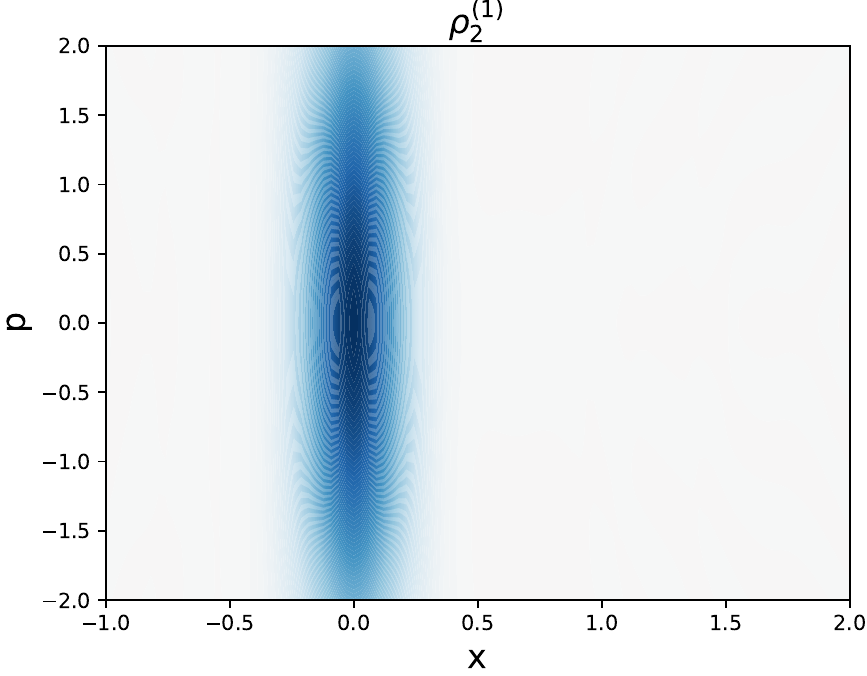}
    \end{subfigure}
    \begin{subfigure}[b]{0.45\textwidth}
        \includegraphics[width=\linewidth]{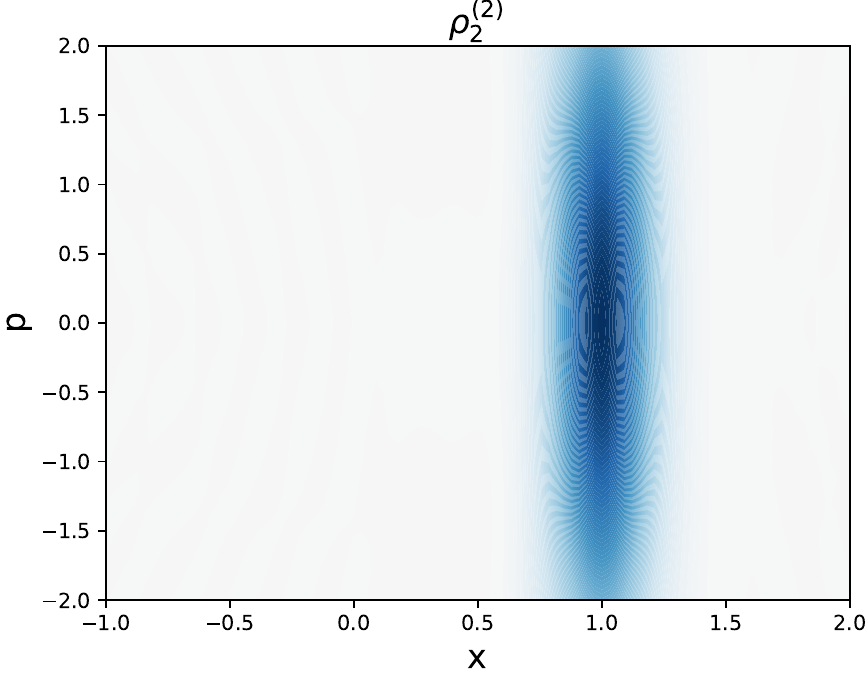}
    \end{subfigure}
    \caption{The Wigner functions of the reduced density matrix of the second qumode after projecting the first qumode onto the state $\ket{0,\Delta}_{O_1}^{gauss}$ (left) and $\ket{1,\Delta}_{O_1}^{gauss}$ (right), which agrees with Eqs. \eqref{eq:target-osc}.}
    \label{fig:placeholder}
\end{figure*}

\section{Comparison to qubit QFT}\label{sec: comparison to qubit qft}

One question is: are there parameter regimes in which it is advantageous to use this method over the qubit QFT method which scales as $\mathcal{O}(n^2)$ with the number of qubits $n$? Because our derivation of QFT includes only loose lower bounds on the fidelity, it is difficult to make unambigous conclusions in this regard. However, by using some simplifying assumptions, we can still make some limited statements with a ``back-of-the-envelope" estimation. The runtime of qumode QFT is (ignoring constant factors):
\begin{equation*}
\begin{split}
    T_{osc} &\approx m2^{n/m+a}(\Delta+\mathrm{log}(1/\epsilon_{QSP}))\\
    &+ m2^{n/m+a^\prime}(\Delta+\mathrm{log}(1/\epsilon_{QSP})) + m^2.
\end{split}
\end{equation*}Setting the qudit level $d = 2^{n/m}$ and insisting that the qubit and qudit Hilbert spaces are the same size gives us:
\begin{equation*}
    m = \frac{n}{\mathrm{log}_2(d)}.
\end{equation*}Suppose we want the total error $\epsilon$ to be on the order of $m\tfrac{\sigma}{\Delta}\approx\tfrac{m}{2^a}\approx\tfrac{m}{2^{a^\prime}}$. The $\mathrm{erfc}(\cdot)$ contribution to the error implies that we require $2^{a^\prime}\approx\tfrac{m}{\epsilon}$. This gives us a runtime of approximately:
\begin{equation*}
\begin{split}
    T_{osc} &\approx 2m^2d(\Delta+\mathrm{log}(1/\epsilon_{QSP})) + m^2\\
    &\approx 2m^2d(\Delta+\mathrm{log}(1/\epsilon_{QSP})).
\end{split}
\end{equation*}Denote the average primitive gate times for the qubit and hybrid qubit-oscillator systems by $T_{gate,q}$ and $T_{gate,o}$, respectively. Estimating the qubit QFT runtime as $T_{qubit}\approx n^2=m^2\mathrm{log}^2_2(d)$ gives us a qumode speedup of:
\begin{equation*}
    \frac{T_{qubit}}{T_{osc}} \approx \frac{T_{gate,q}}{T_{gate,o}}\left(\frac{\epsilon\mathrm{log}_2^2(d)}{d(\Delta+\mathrm{log}(1/\epsilon_{QSP}))}\right).
\end{equation*}Thus, the qumode QFT has a lower runtime than qubit QFT when the following condition is satisfied:
\begin{equation}\label{eq: speedup condition}
    \frac{T_{gate,q}}{T_{gate,o}} \geq 2\frac{d}{\epsilon}\left(\frac{\Delta+\mathrm{log}(1/\epsilon_{QSP})}{\mathrm{log}_2^2(d)}\right)
\end{equation}Thus, we remark that qumode QFT offers a speedup over (exact) qubit QFT in some combination of the following regimes:
\begin{enumerate}
    \item The runtime of qubit primitive gates is much higher than that of primitive qubit-oscillator hybrid gates (on average).
    \item The qudit level $d$ is much smaller than the inverse error $\tfrac{1}{\epsilon}$.
\end{enumerate}

As a quick illustrative example, we can consider the hypothetical case where $\epsilon_{QSP}\approx 10^{-3}$, in which case $\Delta + \mathrm{log}_2(1/\epsilon_{QSP})\approx11$ when $\Delta = 1$. Plotting the RHS of Eq.~\eqref{eq: speedup condition} as a function of $(d,\epsilon)$ gives us a contour map where the level curves tell us what ratio $\tfrac{T_{gate,q}}{T_{gate,o}}$ would be necessary to reach a break-even point where qumode QFT is faster than qubit QFT. This resource comparison is given in Fig~\ref{fig: resource comparison figure}.
\begin{figure}[htbp!]
    \centering
    \includegraphics[width=\linewidth]{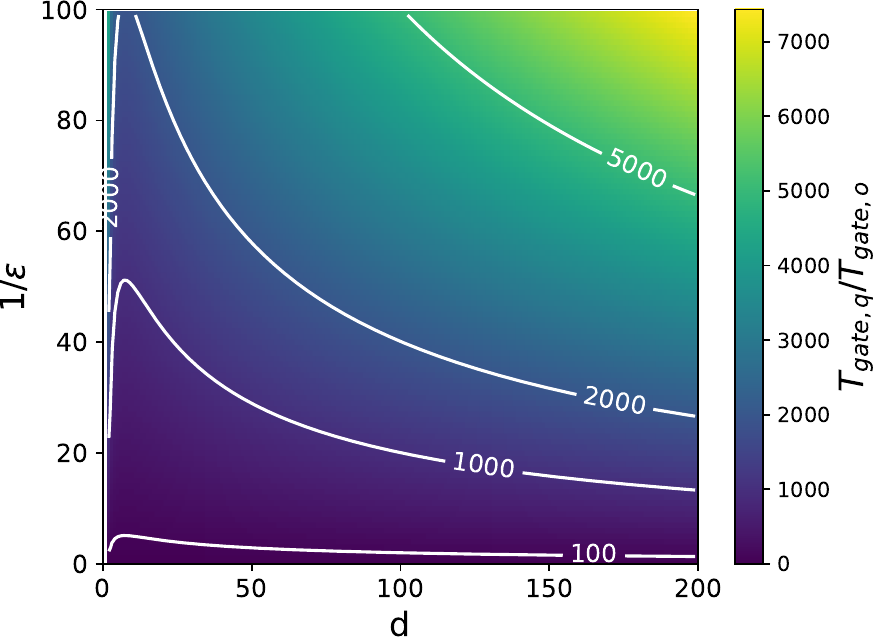}
    \caption{The gate time ratio $\tfrac{T_{gate,q}}{T_{gate,o}}$ required to achieve a break-even speedup of qumode QFT over qubit QFT is a function of the qudit dimension $d$ and the loose upper bound on the error $\epsilon$. When these average primitive gate times are equal, the level curves can be interpretted as the inverse of the qumode QFT speed-up.}
    \label{fig: resource comparison figure}
\end{figure}We can see that in order for qumode QFT to have a speedup over exact qubit QFT for a lower bound on the error of $\approx 10^{-2}$ or smaller, the qudit dimension must be small and the qumode gate times must be at least 3 orders of magnitude faster than the analogous qubit gate times. However, we emphasize that this resource comparison utilizes a loose upper bound on the error $\epsilon$, which likely leads to an overestimation of the runtime of qumode QFT, given that its runtime scales as $\mathcal{O}(\tfrac{1}{\epsilon})$.

\section{Conclusion}\label{sec: conclusion}

In this work, we demonstrated a method of transferring a multi-qubit state to a multi-qumode state. This method has a provably exponential speed-up over single-qumode state transfer~\cite{liu2024mixed} at the cost of a linear increase in the number of qumodes needed. That is, the $\mathcal{O}(2^n)$ term in the single-qumode case becomes $\mathcal{O}(2^{n/m})$ for the $m$-qumode case. The method is flexible in the sense that the number of qumodes we allocate to this procedure is a user-input variable that comes with a quantifiable trade-off.

We have also shown that this parallel state transfer has an improved robustness against photon loss error as compared to the single-mode protocol. Furthermore, both the single and multi-qumode abelian state transfer procedures possess modest intrinsic resistance to position displacement errors, which can be intuited as arising from the "GKP-like" structure of the QSP polynomials used. 

We have demonstrated how this state transfer procedure can be used as a subroutine in hybrid CV-DV quantum computation, using the quantum Fourier transform as an illustrative example, which likewise achieves an exponential speedup over its single-qumode counterpart. We derived an approximate condition which must be met in order for multi-qumode QFT to be advantageous in terms of runtime over exact qubit QFT. This resource estimate shows that qumode QFT typically has a runtime advantage in the regime of small numbers of qubits per qumode, low algorithmic fidelity, and oscillator gates that are at least two orders of magnitude faster than their qubit counterparts. We note that a large contribution to the qumode QFT runtime as a function of the algorithmic fidelity is primarily attributable to the necessity of padding qubits. That is, an error scaling of $\mathcal{O}(1/2^a)$ and a runtime scaling of $\mathcal{O}(2^a)$ imply a runtime scaling of $\mathcal{O}(1/{\epsilon_{alg}})$ in the algorithmic error $\epsilon_{alg}$.

The state transfer procedure in this work represents a step forward towards scalable state transfer, which will be of particular interest for scalable fault-tolerant distributed quantum computation, where the high-fidelity transfer of exponentially large Hilbert spaces with low latency will be critical. We note the potential advantages the multi-mode state transfer has over another promising method: using entangled bits (ebits) as a resource to perform teleportation and nonlocal gates.~\cite{Eisert2000} Such methods typically require a process known as entanglement distillation~\cite{Bennett_ebit_distillation, qLDPC_bell_state_distillation}, which consumes many low-fidelity Bell pairs to produce a single high-fidelity Bell pair. This can incur a large qubit overhead.

The state transfer protocol in the current work is resource-efficient in the sense that the number of qumodes required is upper-bounded by the number of qubits whose state is being transferred and no distillation procedure is necessary. However, the current work does not present a thorough resource estimate comparison to entanglement distillation or ion shuttling. Such a resource estimate on the scale of fault-tolerant quantum computation would be an interesting future direction of research that would provide value in guiding distributed quantum computing architecture design.

It remains unclear to what extent the ability of state transfer to suppress some position displacement errors will be useful in practice when considering realistic scenarios where errors occur during the state transfer procedure itself. Delving deeper into the noise-resilience of this procedure and how it can be combined with error correction codes that correct momentum displacement errors (such as a biased GKP code~\cite{stafford_biased_2023}) would be an interesting future direction of research.

Finally, we note that while the runtime of the multi-qumode state transfer is scalable, its unclear to what extent the same is true for computational primitives utilizing state transfer such as QFT. In order for the qumode QFT to be competitive with its qubit state-of-the-art counterparts, it will be crucial to develop alternative means of encoding a qudit into the phase space of a qumode in such a way that the use of padding qubits is unnecessary.

Suppose the above goal of eliminating padding qubits be achieved. The next logical step beyond this would be to incorporate state transfer into compilation schemes wherein algorithm subroutines that are expensive on qubits are offloaded to qumodes, then transferred back to qubits for further processing. We look forward to seeing what implications this state transfer procedure has for the future of hybrid qubit-qumode computation.

\section*{Acknowledgements}

The authors thank Frank Mueller, John Stack, and Cheng Chu for helpful discussions and John Martyn for helpful comments on the manuscript. This work is supported by the U.S. Department of Energy, Office of Science, Advanced Scientific Computing Research, under contract number DE-SC0025384 as well as NSF grants OMA-2120757, PHY-2325080, and OSI-2410675.

\appendix

\section{Multi-Mode QFT circuit}\label{sec: appendix QFT derivation}

\subsection{Full derivation of Hadamard gate}\label{sec: Hadamard gate derivation}

We begin by noting the similarity of Eq~\eqref{eq: qudit Hadamard gate} with the qubit QFT operation that was implemented using a free evolution gate for the single-qumode case. Recall from the single-qumode QFT that the free evolution gate only produced the needed action when the qumode state was periodic. A similar statement is true for our current case. Our $m$ qumodes do not obey modulo-$d$ arithmetic, therefore we must make our qumode wavefunction itself periodic when applying the free evolution gate, followed by an operation which then removes this periodicity.

Suppose we have a single Gaussian wavepacket $\ket{x^{(j)}}_{O_j}^{gauss}$ together with a collection of qubits in the state $\ket{+}^{\otimes a}$. We require a gate sequence $U_p^{(j)}(\Delta)$ which has the action:
\begin{equation}
\begin{split}
    &U_{p}^{(j)}\ket{+}^{\otimes a}\otimes\ket{x^{(j)},\Delta}_{O_j}^{gauss}\\
    &= \ket{0}^{\otimes a}\otimes\frac{1}{\sqrt{2^a}}\sum_{k^{(j)}=0}^{2^a-1}\ket{2^{n/m}k^{(j)} + x^{(j)},\Delta}^{gauss}_{O_j}.
\end{split}
\end{equation}This is equivalent to requiring that:
\begin{equation}
    U_{p}^{(j)}\ket{k^{(j)}}\ket{x^{(j)},\Delta}_{O_j}^{gauss} = \ket{0}^{\otimes a}\ket{2^{n/m}k^{(j)}+x^{(j)},\Delta}_{O_j}^{gauss}.
\end{equation}
This gate, which we will refer to as a "padding gate", bears a resemblance to the single-qumode state transfer circuit in the sense that we transfer information $k^{(j)}$ from the qubits to the qumode. The difference with the single-qumode case is that the qumode already contains non-zero information. Our circuit construction must take this into account.

Recall that the first stage of the state transfer procedure was to entangle the qumode with the qubits using a sequence of controlled displacement operations. What we want here is an entangling gate that affects the transformation:
\begin{equation}
\begin{split}
    &U_{p,e}^{(j)}(\Delta)\ket{k^{(j)}}\ket{x^{(j)},\Delta}_{O_j}^{gauss}\\
    &= \ket{k^{(j)}}\ket{2^{n/m}k^{(j)} + x^{(j)},\Delta}_{O_j}^{gauss}.
\end{split}
\end{equation}We can implement this using a similar sequence of controlled displacements as was used for the single-qumode case. We simply note that we want to displace the qumode by a total amount $2^{n/m}k^{(j)}\Delta$. $k^{(j)}$ is an integer whose binary expansion is $k^{(j)}=\sum_{l_j=1}^ak^{(j)}_{l_j}\cdot 2^{a-l_j}$. Thus, we can affect the desired transformation using the sequence of controlled displacement gates:
`\begin{align}
    &\bigotimes_{l_j=1}^{a}cD_{l_j}(2^{{n/m}+a- l_j}\Delta)\ket{k^{(j)}}\ket{x^{(j)},\Delta}_{O_j}\\
    &= \ket{k^{(j)}}\ket{2^{n/m}k^{(j)}+x^{(j)},\Delta}_{O_j}^{gauss}.
\end{align}Next we must disentangle the qubits from the qumode. Recall from the single-qumode state transfer that one can use a sequence of QSP circuits, each of which implemented a square wave polynomial that would flip the $l_jth$ qubit conditioned on the $l_j$th bit of the integer $x^{(j)}$ encoded into the position of the oscillator being 1. We need to do something similar here, but only apply such QSP sequences for a subset of the bits of the integer $2^nk^{(j)} + x^{(j)}$. That is, we want to flip the $l_j$th qubit conditioned only on information about $k^{(j)}$ and not $x^{(j)}$. We note that $2^{n/m}k^{(j)} + x^{(j)}$ is an $a+1$-bit integer where bits $a+1,\dots a+n/m$ contain information about only $x^{(j)}$ and bits $1,\dots,a$ contain information about only $k^{(j)}$. Thus, we can apply a sequence of QSP circuits, each of which flips the $l_j$th qubit conditioned on the $l_j$th bit of $2^{n/m}k^{(j)} + x^{(j)}$ being 1. Denoting the $l_j$th such QSP sequence by $U_{p,d}^{(l_j)}(\Delta)$, we find that the entangling and disentangling padding gate steps together affect the transformation:
\begin{equation}
\begin{split}
    &\left[\bigotimes_{l_j=1}^aU_{p,d}^{(p)}(\Delta)\right]\left[\bigotimes_{l_j=1}^acD_{l_j}(2^{n/m+a-l_j}\Delta)\right]\ket{+}^{\otimes a}\ket{x^{(j)},\Delta}_{O_j}^{gauss}\\
    &= \ket{0}^{\otimes a}\frac{1}{\sqrt{2^a}}\sum_{k^{(j)}=0}^{2^a-1}\ket{2^{n/m}k^{(j)} + x^{(j)},\Delta}_{O_j}^{gauss}.
\end{split}
\end{equation}Now that our qumode wavefunction is periodic, we symmetrize it by applying a displacement $D_j(-2^{{n/m}+a-1}\Delta)$:
\begin{equation}
\begin{split}
    &D_j(-2^{{n/m}+a-1}\Delta)\ket{0}^{\otimes a}\frac{1}{\sqrt{2^a}}\sum_{k^{(j)}=0}^{2^a-1}\ket{2^{n/m}k^{(j)} + x^{(j)},\Delta}_{O_j}^{gauss}\\
    &= \ket{0}^{\otimes a}\frac{1}{\sqrt{2^a}}\sum_{k^{(j)}=-2^{a}/2}^{2^a/2-1}\ket{2^{n/m}k^{(j)} + x^{(j)},\Delta}_{O_j}^{gauss}.
\end{split}
\end{equation}Next we apply a free evolution gate which affects the Fourier transform on the qumode state:
\begin{equation}\label{eq: post Fourier gate H}
\begin{split}
    &F_j\frac{1}{\sqrt{2^a}}\sum_{k^{(j)}=-2^{a}/2}^{2^a/2-1}\ket{2^{n/m}k^{(j)} + x^{(j)},\Delta}_{O_j}^{gauss}\\
    &\approx \left(\frac{2\pi\sigma}{\Delta 2^{n/m}}\sqrt{\frac{2}{\pi}}\right)^{1/2}\sum_{\ell^{(j)}=-\infty}^{\infty}e^{2\pi i \ell^{(j)}x^{(j)}/2^{n/m}}\\
    &\times e^{-\left(\tfrac{2\pi}{2^{n/m}\Delta}l^{(j)}\right)^2\sigma^2}\ket{\ell^{(j)}, \Delta^\prime}^{gauss}_{O_j}
\end{split}
\end{equation}where this approximation incurs an error of $\mathcal{O}(1/2^a)$ and $\ket{\ell^{(j)}}$ is defined as:\cite{liu2024mixed}
\begin{equation}
\begin{split}
    \ket{\ell^{(j)}, \Delta^\prime}_{O_j}^{gauss} &= \sqrt{\frac{\Delta 2^{n/m}}{2\pi 2^a}}\int_{\Delta^\prime(\ell^{(j)}-\tfrac{1}{2})}^{\Delta^\prime(\ell^{(j)}+\tfrac{1}{2})}dq^{(j)}\\
    &\times\left[\sum_{k^{(j)}=-2^a/2}^{2^a/2-1}e^{iq^{(j)}2^{n/m}k^{(j)}\Delta}\right]\ket{q^{(j)}}_{O_j}.
\end{split}
\end{equation}
This is nearly the expression that we want, except for the presence of the Gaussian envelopes and the fact that one of the summations runs over $l^{(j)} \in (-\infty, \infty)$ instead of $y^{(j)} = 0,\dots,2^{n/m}-1$. We emphasize that the Gaussian envelopes do not decay quickly enough to automatically account for this discrepancy in summands, especially for small $\tfrac{\sigma}{\Delta}$. For instance, note that for $\tfrac{\sigma}{\Delta} = 0.1$ and $\ell^{(j)}=2 \cdot 2^{n/m}$, $e^{-\left(\tfrac{2\pi}{2^n\Delta}l^{(j)}\right)^2}\approx0.206$. For $\tfrac{\sigma}{\Delta}=10^{-2}$, the Gaussian envelope quantity exceeds $0.9$ for $\ell^{(j)}$ as large as $5\cdot2^{n/m}$. That is, these terms have roughly the same weight as for $\ell^{(j)}\leq2^{n/m}-1$.

Denote $\mathcal{D} = \{y \in \mathbb{Z}|0\leq y \leq 2^{n/m}-1\}$. Suppose we can express $\ell^{(j)}$ as the output of a bijective function $f:\mathbb{Z}\times\mathcal{D}\mapsto \mathbb{Z}$. That is, $f$ takes one integer and one integer in $\mathcal{D}$ and outputs a unique integer in $\mathcal{D}$. We write $\ell^{(j)}(\tilde{k}^{(j)},y^{(j)}) = \ell^{(j)}$ as a shorthand, where $\tilde{k}^{(j)}\in\mathbb{Z}$ and $y^{(j)}\in\mathcal{D}$. We can then re-express Eq.~\ref{eq: post Fourier gate H} as a double summation over $\tilde{k}^{(j)},y^{(j)}$ with any appearance of $\ell^{(j)}$ substituted with its expansion in terms of $\tilde{k}^{(j)},y^{(j)}$. That is, because this mapping is bijective, we will have not added, ommited, or altered the value of any terms in Eq.~\ref{eq: post Fourier gate H}. The function $\ell^{(j)}=2^{n/m}\tilde{k}^{(j)}+y^{(j)}$ is one such bijection. We have used the notation $\tilde{k}^{(j)}$ to avoid any confusion with the original padding index $k^{(j)}$. These two indices play similar roles, but cannot necessarily be interpreted to represent the same quantity.

Performing this substitution, we arrive at the expression:

\begin{widetext}
\begin{equation}\label{eq: post Fourier gate H re-expressed}
\begin{split}
    &\left(\frac{2\pi\sigma}{\Delta 2^{n/m}}\sqrt{\frac{2}{\pi}}\right)^{1/2}\sum_{\tilde{k}^{(j)}=-\infty}^{\infty}\sum_{y^{(j)}=0}^{2^{n/m}-1}e^{2\pi i y^{(j)}x^{(j)}/2^{n/m}}e^{-\left(\tfrac{2\pi}{2^{n/m}\Delta}(2^{n/m}\tilde{k}^{(j)}+y^{(j)})\right)^2\sigma^2}\ket{2^{n/m}\tilde{k}^{(j)}+y^{(j)}, \Delta^\prime}^{gauss}_{O_j}\\
    &\approx \left(\frac{2\pi\sigma}{\Delta 2^{n/m}}\sqrt{\frac{2}{\pi}}\right)^{1/2}\sum_{\tilde{k}^{(j)}=-\infty}^{\infty}\sum_{y^{(j)}=0}^{2^{n/m}-1}e^{2\pi i y^{(j)}x^{(j)}/2^{n/m}}e^{-\left(\tfrac{2\pi}{\Delta}\tilde{k}^{(j)}\right)^2\sigma^2}\ket{(2^{n/m}\tilde{k}^{(j)}+y^{(j)}), \Delta^\prime}_{O_j}^{gauss}
\end{split}
\end{equation}
\end{widetext}where the approximation introduced is to error $\mathcal{O}(\tfrac{\sigma}{\Delta})$~\cite{liu2024mixed}.

Our goal is to transfer information about $\tilde{k}^{(j)}$ back to a multi-qubit register; therefore, we must introduce an approximation which truncates the infinite summation over $\tilde{y}(j)$ to run over integers $-\tfrac{2^{a^\prime}}{2},\dots,\tfrac{2^{a^\prime}}{2}-1$ for some integer $a^\prime$. We denote this integer by $a^\prime$ to emphasize that it may be chosen differently from the original padding integer $a$. We show in Appendix~\ref{sec: Hadamard gate error bounds} that one can truncate this expression in this fashion to error $\mathcal{O}(\mathrm{erfc}(\tfrac{\Delta}{\sigma}2^{a^\prime}))$ to get the approximate expression:
\begin{widetext}
\begin{equation}
    \left(\frac{2\pi\sigma}{\Delta 2^{n/m}}\sqrt{\frac{2}{\pi}}\right)^{1/2}\sum_{\tilde{k}^{(j)}=-2^{a^\prime}/2}^{2^{a^\prime}/2-1}\sum_{y^{(j)}=0}^{2^{n/m}-1}e^{2\pi i y^{(j)}x^{(j)}/2^{n/m}}e^{-\left(\tfrac{2\pi}{\Delta}\tilde{k}^{(j)}\right)^2\sigma^2}\ket{(2^{n/m}\tilde{k}^{(j)}+y^{(j)}), \Delta^\prime}_{O_j}^{gauss}
\end{equation}
\end{widetext}
We can break down the problem of disentangling the padding qubits from the qumode in a two-step black box fashion. We will first verify that these black boxes give us the desired result, then we will explicitly construct gate sequences which realize the action of these black boxes. We call the first black box $U_{ap,ent,j}$ ("anti-padding entangling gate sequence") which first encodes information about the sign of $\tilde{k}^{(j)}$ into a $1$-qubit register and information about the magnitude of $\tilde{k}^{(j)}$ into an $a^\prime$-qubit register. We call the second black box $U_{ap,d,j}$ ("anti-padding disentangling gate sequence") which displaces the qumode by an amount proportional to $\tilde{k}^{(j)}$ controlled on the information about $\tilde{k}^{(j)}$ stored in the $a^\prime+1$-qubit register.

Suppose we have a black box unitary $U_{ap,ent,j}$ which has the action:
\begin{equation}
\begin{split}
    &U_{ap,ent,j}\ket{2^{n/m}\tilde{k}^{(j)}+y^{(j)}}_{O_j}\ket{0}^{\otimes a^\prime+1}_{Q}\\
    &= \ket{2^{n/m}\tilde{k}^{(j)}+y^{(j)},\Delta^\prime}^{gauss}_{O_j}\ket{|\tilde{k}^{(j)}|}_Q\ket{f(\tilde{k}^{(j)})}_Q
\end{split}
\end{equation}where:
\begin{equation}
    f(\tilde{k}^{(j)}) = 
    \begin{cases}
        0 & \text{if } \tilde{k}^{(j)} \geq 0 \\
        1 & \text{if } \tilde{k}^{(j)} < 0.
    \end{cases}
\end{equation}Then:
\begin{widetext}
\begin{equation}\label{eq: black box 1 action}
\begin{split}
    &U_{ap,ent,j}\left(\Delta^\prime\sigma\sqrt{\frac{2}{\pi}}\right)^{1/2}\sum_{\tilde{k}^{(j)}=-2^{a^\prime}/2}^{2^{a^\prime}/2-1}\sum_{y^{(j)}=0}^{2^{n/m-1}}e^{2\pi i y^{(j)}x^{(j)}/2^{n/m}}e^{-(\Delta^\prime \tilde{k}^{(j)})^2\sigma^2}\ket{(2^{n/m}\tilde{k}^{(j)}+y^{(j)}),\Delta^\prime}^{gauss}_{O_j}\ket{0}^{\otimes a+1}\\
    &=\left(\Delta^\prime\sigma\sqrt{\frac{2}{\pi}}\right)^{1/2}\sum_{\tilde{k}^{(j)}=-2^{a^\prime}/2}^{2^{a^\prime}/2-1}\sum_{y^{(j)}=0}^{2^{n/m-1}}e^{2\pi i y^{(j)}x^{(j)}/2^{n/m}}e^{-(\Delta^\prime \tilde{k}^{(j)})^2\sigma^2}\ket{(2^{n/m}\tilde{k}^{(j)}+y^{(j)}),\Delta^\prime}^{gauss}_{O_j}\ket{|\tilde{k}^{(j)}|}_{Q}\ket{f(\tilde{k}^{(j)})}_Q.
\end{split}
\end{equation}
\end{widetext}Now suppose we have a second black box unitary $U_{ap,d,j}$ which has the following action:
\begin{equation}
\begin{split}
    &U_{ap,d,j}\ket{(2^{n/m}\tilde{k}^{(j)}+y^{(j)}),\Delta^\prime}^{gauss}_{O_j}\ket{|\tilde{k}^{(j)}|}_{Q}\ket{f(\tilde{k}^{(j)})}_Q\\
    &= \ket{y^{(j)},\Delta^\prime}_{O_j}^{gauss}\ket{|\tilde{k}^{(j)}|}_{Q}\ket{f(\tilde{k}^{(j)})}_Q.
\end{split}
\end{equation}Then:

\begin{widetext}
\begin{equation}
\begin{split}
    &U_{ap,d,j}\left(\Delta^\prime\sigma\sqrt{\frac{2}{\pi}}\right)^{1/2}\sum_{\tilde{k}^{(j)}=-2^{a^\prime}/2-1}^{2^{a^\prime}/2-1}\sum_{y^{(j)}=0}^{2^{n/m-1}}e^{2\pi i y^{(j)}x^{(j)}/2^{n/m}}e^{-(\Delta^\prime \tilde{k}^{(j)})^2\sigma^2}\ket{(2^{n/m}\tilde{k}^{(j)}+y^{(j)}),\Delta^\prime}^{gauss}_{O_j}\ket{|\tilde{k}^{(j)}|}_{Q}\ket{f(\tilde{k}^{(j)})}_Q\\
    &=\left(\Delta^\prime\sigma\sqrt{\frac{2}{\pi}}\right)^{1/2}\sum_{\tilde{k}^{(j)}=-2^{a^\prime}/2}^{2^{a^\prime}/2-1}\sum_{y^{(j)}=0}^{2^{n/m-1}}e^{2\pi i y^{(j)}x^{(j)}/2^{n/m}}e^{-(\Delta^\prime \tilde{k}^{(j)})^2\sigma^2}\ket{y^{(j)},\Delta^\prime}^{gauss}_{O_j}\ket{|\tilde{k}^{(j)}|}_{Q}\ket{f(\tilde{k}^{(j)})}_Q\\
    &=\left[\frac{1}{\sqrt{2^{n/m}}}\sum_{y^{(j)}=0}^{2^{n/m}-1}e^{2\pi iy^{(j)}x^{(j)}/2^{n/m}}\ket{y^{(j)},\Delta^\prime}_{O_j}^{gauss}\right]\otimes \left[\left(2^{n/m}\Delta^\prime\sigma\sqrt{\frac{2}{\pi}}\right)^{1/2}\sum_{\tilde{k}^{(j)}=-2^{a-1}}^{2^{a-1}-1}e^{-(\Delta^\prime \tilde{k}^{(j)})^2\sigma^2}\ket{|\tilde{k}^{(j)}|}_Q\ket{f(\tilde{k}^{(j)})}_Q\right]
\end{split}
\end{equation}
\end{widetext}where we can see that the qumode state now encodes $H^{(d=2^{n/m})}\ket{x^{(j)},\Delta}_{O_j}^{gauss}$ up to a change in the lattice spacing $\Delta \rightarrow\Delta^\prime$. We can accept this lattice spacing change as we will simply take it into account in the construction of the controlled rotation gates. The question is: How do we construct these two black boxes?

\subsubsection{Constructing the first black box: $U_{ap,ent,j}$}

The construction of this black box is done in two parts: one which maps the $a^\prime+1$th qubit register to $\ket{\tilde{k}^{j)}}$ and one which maps the other $a^\prime$ qubits to $\ket{|\tilde{k}^{(j)}|}$. We begin with the first stage by noting that because $y^{(j)}\leq2^{n/m}-1$, querying whether or not $\tilde{y}^{(j)}$, querying whether or not $\tilde{k}^{(j)}<0$ is equivalent as querying whether or not $2^{n/m}\tilde{k}^{(j)}+y^{(j)}<0$. The complete gate sequence is given by:
\begin{equation}
    D(\tfrac{\Delta^\prime}{2})R_{QSP}D(-\tfrac{\Delta}{2})
\end{equation}where $R_{QSP}$ is the QSP circuit.
We can consider a method by which we construct the entangling stage (the disentangling stage remains the same) of this first black box using quantum signal processing. What we want is a degree $d$ polynomial $P(e^{i\kappa \hat{x}})$ which implements the following block encoding:
\begin{equation}
    R_{\Theta} = \begin{bmatrix}
        \Theta(\hat{x}) & -\sqrt{1 - \Theta(\hat{x})^2}\\
        \sqrt{1 - \Theta(\hat{x})^2} & \Theta(\hat{x})
    \end{bmatrix}.
\end{equation}That is, when $x \geq 0$, the QSP circuit acts as the identity on the qubit and does not flip it. However, when $x < 0$, the qubit gets flipped from $\ket{0}$ to $\ket{1}$. The total circuit is $D_j(-\tfrac{\Delta}{2})R_{QSP}D_j(\tfrac{\Delta^\prime}{2})$. The purpose of the displacement gate conjugation is to ensure that the Gaussian wavepacket at $x=0$ is contained entirely in the interval $(0,\infty)$. The inverse displacement moves all Gaussian wavepackets back to their original position.

The second stage is more involved. We begin by noting that we cannot simply apply the same set of QSP sequences as was used for the forward padding gate in Eq.~\ref{eq: forward padding GQSP Polynomial}, but with $a^\prime$ anti-padding qubits. The reason for this is that we would have to require that:
\begin{equation*}
\begin{split}
&R_{j,pad}^{(\Delta,a^\prime)}\ket{-2^{n/m}|\tilde{k^{(j)}}|+y^{(j)}}_{O_j}\ket{0}_{Q_j}\\
&= R_{j,pad}^{(\Delta,a^\prime)}\ket{2^{n/m}|\tilde{k^{(j)}}|+y^{(j)}}_{O_j}\ket{0}_{Q_j}
\end{split}
\end{equation*} 
for all $j, \tilde{k}^{(j)}$, which can be shown not to be the case. Instead, we note that the polynomial we need for $\tilde{k}^{(j)}\leq-1$ is a slightly modified version of $R_{j,pad}^{(\Delta,a^\prime)}$ that has been shifted by $(-2^{a^\prime-j}+1)2^{n/m}\Delta^\prime$. Thus, for negative $\tilde{k}^{(j)}$, what we need is:
\begin{equation}
\begin{split}
    \tilde{R}_{j,pad}^{(\Delta^\prime,a^\prime)}(\hat{x}) = R_{j,pad}^{(\Delta^\prime,a^\prime)}(\hat{x}+(2^{a^\prime-j}-1)2^{n/m}\Delta^\prime)
\end{split}
\end{equation}Information about the sign of $\tilde{k}^{(j)}$ is stored in the $a^\prime+1$th qubit register, therefore, we can apply two controlled QSP sequences; one in which $R_{j,pad}^{(\Delta,a^\prime)}$ is applied controlled on the $a^\prime+1$th qubit being $\ket{0}$ and one in which $\tilde{R}_{j,pad}^{(\Delta,a^\prime)}$ is applied controlled on the $a^\prime+1$th qubit being $\ket{1}$.
This concludes the construction of the entangling stage of the anti-padding gate. A circuit depiction of this gate is given in Fig.~\ref{fig:anti-padding gate entangling stage}.

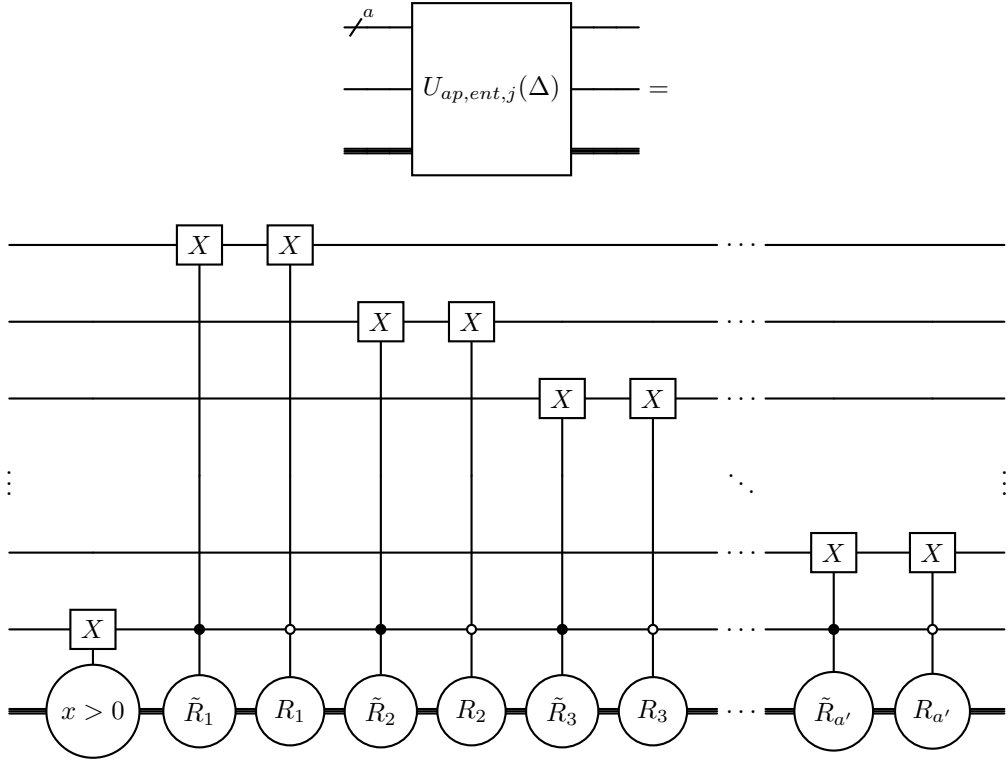
\begin{figure*}[ht!]
    \centering
    \begin{quantikz}[wire types={q,q,b}, column sep = 0.3cm]
            & \qwbundle[]{a} & & \gate[3]{U_{ap,ent,j}(\Delta)}& & &\\
            & & & & & &\\
            & & & & & &\\
        \end{quantikz} = 
        \begin{quantikz}[wire types={q,q,q,n,q,q,b}]
            & & \gate[]{X} & \gate[]{X} & & & & & \ \ldots \ & & &\\
            & & \wire[u]{q} & \wire[u]{q} & \gate[]{X} & \gate[]{X} & & & \ \ldots \ & & &\\
            & & \wire[u]{q} & \wire[u]{q} & \wire[u]{q} & \wire[u]{q} & \gate[]{X} & \gate[]{X} & \ \ldots \ & & &\\
            \vdots & & \wire[u]{q} & \wire[u]{q} & \wire[u]{q} & \wire[u]{q} & \wire[u]{q} & \wire[u]{q} &  \ \ddots \ & & & \vdots\\
            & & \wire[u]{q} & \wire[u]{q} & \wire[u]{q} & \wire[u]{q} & \wire[u]{q} & \wire[u]{q} & \ \ldots \ & \gate[]{X} & \gate[]{X} &\\
            & \gate[]{X} & \ctrl{1}\wire[u]{q} & \octrl{1}\wire[u]{q} & \ctrl{1}\wire[u]{q} & \octrl{1}\wire[u]{q} & \ctrl{1}\wire[u]{q} & \octrl{1}\wire[u]{q} &  \ \ldots \ & \ctrl{1}\wire[u]{q} &\octrl{1}\wire[u]{q} &\\
            & \gate[style=circle]{x>0}\wire[u]{q} & \gate[style=circle]{\tilde{R}_1} & \gate[style=circle]{R_1} & \gate[style=circle]{\tilde{R}_2} & \gate[style=circle]{R_2} & \gate[style=circle]{\tilde{R}_3} & \gate[style=circle]{R_3} &  \ \ldots \ & \gate[style=circle]{\tilde{R}_{a^\prime}} & \gate[style=circle]{R_{a^\prime}} & \\
        \end{quantikz}
    \caption{The entangling stage of the anti-padding gate.}
    \label{fig:anti-padding gate entangling stage}
\end{figure*}

\subsubsection{Constructing the second black box: $U_{ap,d,j}$}

How do we implement the disentangling gates $U_{ap,d,j}$? For this gate sequence, we require double-controlled displacement gates. That is, for $\ell_j=1,\dots,a^\prime$, we want the following gate:
\begin{equation}\label{eq: jth antipadding disentangler}
\begin{split}
    &U_{ap,d,j} = D_j(-2^{n/m+\ell_j-1}\Delta^\prime) \otimes \ket{q_1}_{Q}\ket{0}_{Q_{a^\prime}}\bra{q_1}_{Q}\bra{0}_{Q_{a^\prime}}\\
    &+ D_j(2^{n/m+\ell_j-1}\Delta^\prime)\otimes\ket{q_1}_{Q}\ket{1}_{Q_{a^\prime}}\bra{q_1}_{Q}\bra{1}_{Q_{a^\prime}}
\end{split}
\end{equation}

In practice, we can realize $U_{ap,d,j}$ in two stages: one corresponding to the first term in Eq.~\ref{eq: jth antipadding disentangler} and one corresponding to the second term. Each of these steps can be done with the addition of two Toffoli gates one additional ancilla qubit. In the first stage, we apply a Toffoli gate with the two data qubits as the control and the ancilla as the target. We then apply a conditional displacement of $-2^{n/m+j-1}\Delta$ conditioned on the ancilla qubit. We finish this stage by uncomputing the ancilla with another Toffoli gate. The second stage mirrors the first, with the only differences being that we must conjugate this circuit with a Pauli $X$ gate on the second data qubit and that the displacement we apply is $2^{n/m+\ell_j-1}\Delta$. The circuit for this operation is depicted in Fig.~\ref{fig:double-controlled displacement gate}, where we must run this gate for $p=0,1$. A circuit depiction of the disentangling stage as a whole is given in Fig.~\ref{fig:anti-padding gate disentangling stage}

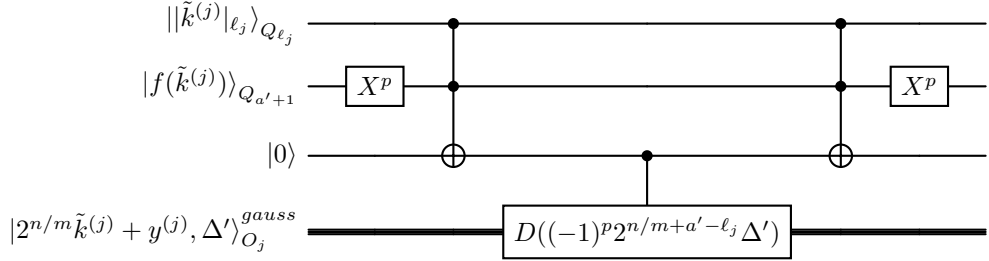
\begin{figure*}[ht!]
    \centering
    \begin{quantikz}[wire types={q,q,q,b}]
            \lstick{$\ket{|\tilde{k}^{(j)}|_{\ell_j}}_{Q_{\ell_j}}$}& & \ctrl{1} & & \ctrl{1} & &\\
            \lstick{$\ket{f(\tilde{k}^{(j)})}_{Q_{a^\prime+1}}$} & \gate[]{X^p} & \ctrl{1} & & \ctrl{1} & \gate[]{X^p} & \\
            \lstick{$\ket{0}$} & & \targ{} & \ctrl{1} & \targ{} & &\\
            \lstick{$\ket{2^{n/m}\tilde{k}^{(j)}+y^{(j)}, \Delta^\prime}^{gauss}_{O_j}$} & & & \gate[]{D((-1)^p2^{n/m+a^\prime-\ell_j}\Delta^\prime)}& & &\\
        \end{quantikz}
    \caption{The double-controlled displacement gate. The ancilla qubit in the third register stores the sum of the bits in the first two registers modulo 2. For all $\ell_j=1,2,\dots,a^\prime$, we run this circuit twice for $p=0,1$.}
    \label{fig:double-controlled displacement gate}
\end{figure*}

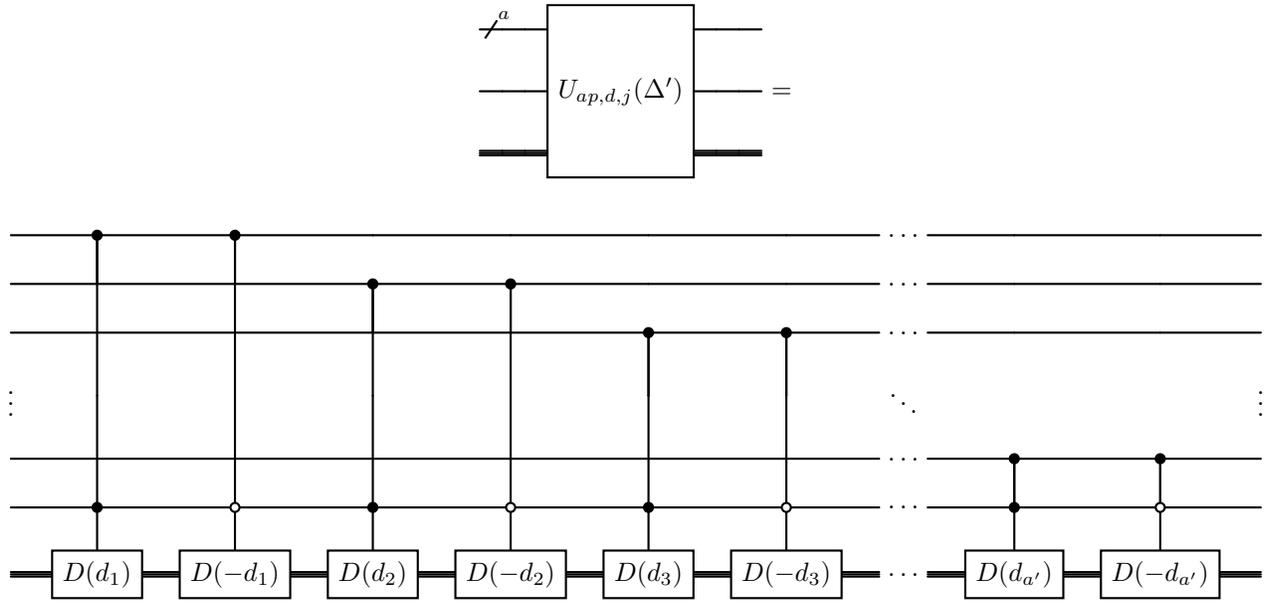
\begin{figure*}[ht!]
    \centering
    \begin{quantikz}[wire types={q,q,b}, column sep = 0.3cm]
            & \qwbundle[]{a} & & \gate[3]{U_{ap,d,j}(\Delta^\prime)}& & &\\
            & & & & & &\\
            & & & & & &\\
        \end{quantikz} = 
        \begin{quantikz}[wire types={q,q,q,n,q,q,b}]
            & \ctrl{1} & \ctrl{1} & & & & & \ \ldots \ & & &\\
            & \wire[u]{q} & \wire[u]{q} & \ctrl{1} & \ctrl{1} & & & \ \ldots \ & & &\\
            & \wire[u]{q} & \wire[u]{q} & \wire[u]{q} & \wire[u]{q} & \ctrl{1} & \ctrl{1} & \ \ldots \ & & &\\
            \vdots & \wire[u]{q} & \wire[u]{q} & \wire[u]{q} & \wire[u]{q} & \wire[u]{q} & \wire[u]{q} &  \ \ddots \ & & & \vdots\\
            & \wire[u]{q} & \wire[u]{q} & \wire[u]{q} & \wire[u]{q} & \wire[u]{q} & \wire[u]{q} & \ \ldots \ & \ctrl{1} & \ctrl{1} &\\
            & \ctrl{1}\wire[u]{q} & \octrl{1}\wire[u]{q} & \ctrl{1}\wire[u]{q} & \octrl{1}\wire[u]{q} & \ctrl{1}\wire[u]{q} & \octrl{1}\wire[u]{q} &  \ \ldots \ & \ctrl{1}\wire[u]{q} &\octrl{1}\wire[u]{q} &\\
            & \gate[]{D(d_1)} & \gate[]{D(-d_1)} & \gate[]{D(d_2)} & \gate[]{D(-d_2)} & \gate[]{D(d_3)} & \gate[]{D(-d_3)} &  \ \ldots \ & \gate[]{D(d_{a^\prime})} & \gate[]{D(-d_{a^\prime})} & \\
        \end{quantikz}
    \caption{The disentangling stage of the anti-padding gate. The displacement amounts $d_{\ell_j}$ are defined to be $2^{n/m+a^\prime-\ell_{j}}\Delta^\prime$.}
    \label{fig:anti-padding gate disentangling stage}
\end{figure*}Putting all of these steps together, we present a circuit for the Hadamard gate as a whole in Fig.~\ref{fig: Hadamard gate circuit}

\begin{figure*}[ht!]
    \centering
    \begin{quantikz}[wire types={q,b}, column sep = 0.3cm]
        \lstick{$\ket{+}^{\otimes a}$}& \gate[2]{U_p^{(j)}(\Delta)} & \gate[]{reset} &\setwiretype{n}& &\lstick{$\ket{0}^{\otimes a^\prime+1}$} \setwiretype{q}&\gate[2]{U_{ap}^{(j)}(\Delta^\prime)} & \gate[]{reset}\\
        \lstick{$\ket{x^{(j)}, \Delta}_{O_j}^{gauss}$}& & \gate[]{D(-\tfrac{2^{n/m+a}}{2}\Delta)} & \gate[]{F} & & & & \rstick{$H^{(2^{n/m})}_j\ket{x^{(j)}, \Delta^\prime}_{O_j}^{gauss}$}\\
        \end{quantikz}
    \caption{The sequence of gate operations used to realize the qudit Hadamard on a qumode.}
    \label{fig: Hadamard gate circuit}
\end{figure*}

\subsection{Fidelity bounds on Hadamard gate}\label{sec: Hadamard gate error bounds}

A single application of the Hadamard gate involves the following steps which each (except the displacement gate step) introduce approximation errors that can be quantified:

\begin{enumerate}
    \item The application of the padding gate using $a$ padding qubits which incurs an error of $\mathcal{O}(a\epsilon)$ for $a$ padding qubits and a QSP error of $\epsilon$. 
    \item The error-free application of a displacement gate $D(2^{n/m+a-1}\Delta)$.
    \item The application of the free evolution gate, which incurs an approximation error of $\mathcal{O}(\tfrac{1}{2^a})$.
    \item The Gaussian envelope approximation in Eq.~\ref{eq: post Fourier gate H re-expressed} which incurs error $\mathcal{O}(\tfrac{\sigma}{\Delta})$.
    \item The approximation that Eq~\ref{eq: post Fourier gate H re-expressed} can be approximated as a finite sum over $\tilde{k}^{(j)}$. Below we show that this incurs an error of $\mathcal{O}(\mathrm{erfc}(\frac{\Delta}{\sigma}2^{a^\prime}))$ for $a^\prime$ anti-padding qubits. In order for this error to be on the order of machine epsilon, a good rule of thumb is to choose $a^\prime = \lceil\mathrm{log}_2(\frac{\Delta}{\sigma})+1\rceil$
    \item The anti-padding disentangling step which uses an approximate QSP circuit.
    
\end{enumerate}

The errors for the first four of these is given in ~\cite{liu2024mixed}, thus we will start by deriving the error bound of the fifth item. Eq~\ref{eq: post Fourier gate H re-expressed} describes the action of this sequence of gates on a single basis state $\ket{x^{(j)},\Delta}^{gauss}_{O_j}$. Consider now the action of this gate on a superposition of such inputs states with coefficients $C_{x^{(j)}}$. In the instance where this Hadamard gate is being applied to one qumode among several, these coefficients would be those obtained by tracing out all the qumodes on which this Hadamard gate acts trivially. The norm of these coefficients will be thus bounded above by 1. This state is given by:

\begin{widetext}
\begin{equation}
\begin{split}
    \sqrt{\frac{2\pi\sigma}{\Delta 2^{n/m}}\sqrt{\frac{2}{\pi}}}\sum_{\tilde{k}(j)=-2^{a^\prime}/2}^{2^{a^\prime}/2-1}\sum_{y^{(j)}=0}^{2^{n/m}-1}\sum_{x^{(j)}=0}^{2^{n/m}-1}C_{x^{(j)}}e^{2\pi ix^{(j)}y^{(j)}/2^{n/m}}e^{-(\tfrac{2\pi}{\Delta}\tilde{k}^{(j)})^2\sigma^2}\ket{2^{n/m}\tilde{k}^{(j)}+y^{(j)},\Delta^\prime}_{O_j}^{gauss}.
\end{split}
\end{equation}
\end{widetext}We can put an error bound on this approximation by considering the norm of the state generated by all the summation terms that were omitted in this truncation:
\begin{widetext}
\begin{equation}
\begin{split}
    &\left|2\sqrt{\frac{2\pi \sigma}{\Delta 2^{n/m}}\sqrt{\frac{2}{\pi}}}\sum_{\tilde{k}^{(j)}=2^{a^\prime}/2}^\infty\left[\sum_{x^{(j)}=0}^{2^{n/m}-1}C_{x^{(j)}}^*e^{2\pi ix^{(j)}y^{(j)}/2^{n/m}}\right]e^{-(\tfrac{2\pi}{\Delta}k^{(j)})^2\sigma^2}\ket{2^{n/m}\tilde{k}^{(j)}+y^{(j)},\Delta^\prime}^{gauss}_{O_j}\right|^2\\
    &=\frac{8\pi \sigma}{\Delta}\sqrt{\frac{2}{\pi}}\sum_{\tilde{k}^{(j)},\tilde{k}^{(j)\prime} = 2^{a^\prime}/2}^{\infty}e^{-(\tfrac{2\pi}{\Delta}\tilde{k}^{(j)})^2\sigma^2}e^{-(\tfrac{2\pi}{\Delta}\tilde{k}^{(j)\prime})^2\sigma^2}\\
    &\times\left[\frac{1}{2^{n/m}}\sum_{y^{(j)},y^{(j)\prime}=0}^{2^{n/m}-1}\sum_{x^{(j)},x^{(j)\prime}=0}^{2^{n/m}-1}C_{x^{(j)}}C_{x^{(j)\prime}}^*e^{2\pi i x^{(j)}y^{(j)}/2^{n/m}}e^{-2\pi ix^{(j)\prime}y^{(j)\prime}}\braket{2^{n/m}\tilde{k}^{(j)}+ y^{(j)\prime},\Delta^\prime|2^{n/m}\tilde{k}^{(j)}+y^{(j)},\Delta^\prime}_{O_j}^{gauss}\right].
\end{split}
\end{equation}
\end{widetext}We assume that:
\begin{equation*}
    \braket{2^{n/m}\tilde{k}^{(j)}+ y^{(j)\prime},\Delta^\prime|2^{n/m}\tilde{k}^{(j)}+y^{(j)},\Delta^\prime}_{O_j}^{gauss}
\end{equation*}is approximately a delta function, which is true when the variance of these peaks is much smaller than $\Delta^\prime$. This allows us to write:
\begin{widetext}
\begin{equation}
\begin{split}
    &\frac{8\pi \sigma}{\Delta}\sqrt{\frac{2}{\pi}}\sum_{\tilde{k}^{(j)},\tilde{k}^{(j)\prime}=2^{a^\prime}/2}^{\infty}e^{-(\tfrac{2\pi}{\Delta}\tilde{k}^{(j)})^2\sigma^2}e^{-(\tfrac{2\pi}{\Delta}\tilde{k}^{(j)\prime})^2\sigma^2}\\
    &\times\left[\frac{1}{2^{n/m}}\sum_{y^{(j)},y^{(j)\prime}=0}^{2^{n/m}-1}\sum_{x^{(j)},x^{(j)\prime}=0}^{\infty}C_{x^{(j)}}C_{x^{(j)\prime}}^*e^{2\pi i x^{(j)}y^{(j)}/2^{n/m}}e^{-2\pi i x^{(j)\prime}y^{(j)\prime}/2^{n/m}}\delta_{y^{(j)}y^{(j)\prime}}\delta_{\tilde{k}^{(j)}\tilde{k}^{(j)\prime}}\right]\\
    &=\frac{8\pi \sigma}{\Delta}\sqrt{\frac{2}{\pi}}\sum_{\tilde{k}^{(j)},\tilde{k}^{(j)\prime}=2^{a^\prime}/2}^{\infty}e^{-(\tfrac{2\pi}{\Delta}\tilde{k}^{(j)})^2\sigma^2}e^{-(\tfrac{2\pi}{\Delta}\tilde{k}^{(j)\prime})^2\sigma^2}\delta_{\tilde{k}^{(j)}\tilde{k}^{(j)\prime}}\\
    &\times\left[\frac{1}{2^{n/m}}\sum_{x^{(j)},x^{(j)\prime}=0}^{2^{n/m}-1}\sum_{y^{(j)}=0}^{2^{n/m}-1}C_{x^{(j)}}C_{x^{(j)\prime}}\left[\sum_{y^{(j)}=0}^{2^{n/m}-1}e^{2\pi iy^{(j)}(x^{(j)}-x^{(j)\prime})/2^{n/m}}\right]\right]\\
    &=\frac{8\pi \sigma}{\Delta}\sqrt{\frac{2}{\pi}}\sum_{\tilde{k}^{(j)},\tilde{k}^{(j)\prime}=2^{a^\prime}/2}^{\infty}e^{-(\tfrac{2\pi}{\Delta}\tilde{k}^{(j)})^2\sigma^2}e^{-(\tfrac{2\pi}{\Delta}\tilde{k}^{(j)\prime})^2\sigma^2}\delta_{\tilde{k}^{(j)}\tilde{k}^{(j)\prime}}\left[\frac{1}{2^{n/m}}\sum_{x^{(j)},x^{(j)\prime}=0}^{2^{n/m}-1}C_{x^{(j)}}C_{x^{(j)\prime}}\left[2^{n/m}\delta_{x^{(j)}x^{(j)\prime}}\right]\right]\\
    &=\frac{8\pi \sigma}{\Delta}\sqrt{\frac{2}{\pi}}\sum_{\tilde{k}^{(j)},\tilde{k}^{(j)\prime}=2^{a^\prime}/2}^{\infty}e^{-(\tfrac{2\pi}{\Delta}\tilde{k}^{(j)})^2\sigma^2}e^{-(\tfrac{2\pi}{\Delta}\tilde{k}^{(j)\prime})^2\sigma^2}\delta_{\tilde{k}^{(j)}\tilde{k}^{(j)\prime}}\left[\sum_{x^{(j)}=0}^{2^{n/m}-1}|C_{x^{(j)}}|^2\right]\\
    &\leq \frac{8\pi \sigma}{\Delta}\sqrt{\frac{2}{\pi}}\sum_{\tilde{k}^{(j)},\tilde{k}^{(j)\prime}=2^{a^\prime}/2}^{\infty}e^{-(\tfrac{2\pi}{\Delta}\tilde{k}^{(j)})^2\sigma^2}e^{-(\tfrac{2\pi}{\Delta}\tilde{k}^{(j)\prime})^2\sigma^2}\delta_{\tilde{k}^{(j)}\tilde{k}^{(j)\prime}}\\
    &=\frac{8\pi \sigma}{\Delta}\sqrt{\frac{2}{\pi}}\sum_{\tilde{k}^{(j)}=2^{a^\prime}}^{\infty}e^{-(\tfrac{2\sqrt{2}\pi}{\Delta}\tilde{k}^{(j)})^2\sigma^2}.
\end{split}
\end{equation}
\end{widetext}The sum over $\tilde{k}^{(j)}$ is the sum of the areas of rectangles with width 1 and height $e^{-(\tfrac{2\sqrt{2}\pi}{\Delta}\tilde{k}^{(j)})^2\sigma^2}$. Thus, we can bound this summation by an indefinite integral over the interval $[2^{a^\prime}-1,\infty)$, giving us an upper bound on the error of:
\begin{equation}
\begin{split}
    &\leq \frac{8\pi\sigma}{\Delta}\sqrt{\frac{2}{\pi}}\int_{2^{a^\prime}-1}^\infty e^{-(\tfrac{2\sqrt{2}\pi}{\Delta}x)^2\sigma^2}dx\\
    &=2\left(\mathrm{erfc}\left(\tfrac{2\sqrt{2}\pi\sigma}{\Delta}\left(\tfrac{2^{a^\prime}}{2}-1\right)\right)\right).
\end{split}
\end{equation}In the limit of $2^{a^\prime-1}\gg1$, this is approximately:
\begin{equation}
\begin{split}
    \approx2\left(\mathrm{erfc}(\sqrt{2}\pi\tfrac{\sigma}{\Delta}2^{a^\prime})\right)
\end{split}
\end{equation}To get a sense of a scale for how large $a^\prime$ has to be in order for the error contribution for this particular approximation step to be negligible, we can observe that $2(\mathrm{erfc(1}))\approx0.3$ and $2(\mathrm{erfc}(6))\approx10^{-17}$. That is, once $2^{a^\prime}$ surpasses $\frac{1}{\sqrt{2}\pi}\frac{\Delta}{\sigma}$, the error drops exponentially and quickly becomes negligible. Thus, we require that $a^\prime = \mathcal{O}(\mathrm{log_2}(\frac{\Delta}{\sigma}))$ up to an additive constant factor that is on the order of unity.

\section{Correcting position displacement errors on a cat qubit}\label{sec: correct cat state}

The error suppression scheme described above essentially suppresses position displacement errors on a $d=2^n$ qudit whose levels are encoded into Gaussian wavepackets located on a 1D lattice of the position of the oscillator. This requires $\mathrm{log}_2(d)$ controlled displacement gates and QSP sequences. The largest displacement required is $\tfrac{d\Delta}{2}$ and the circuit depth of the highest QSP circuit needed is $\mathcal{O}(\tfrac{d}{\Delta}\mathrm{log}_2(\tfrac{1}{\epsilon}))$. Therefore, we expect this to be most readily applied to problems where $d$ is small, $\Delta$ is not small, and $\epsilon$ does not necessarily need to be small. One problem that fits all of these criteria is suppressing displacement errors on a cat qubit:
\begin{equation}
\begin{split}
    \ket{\psi_{cat}} &= C_0\ket{0}_L + C_1\ket{1}_L\\
    &=\tfrac{C_0}{\mathcal{N}_0}\left(\ket{\alpha}+\ket{-\alpha}\right) + \tfrac{C_1}{\mathcal{N}_1}\left(\ket{\alpha}-\ket{-\alpha}\right)\\
    &=\left(\tfrac{C_0}{\mathcal{N}_0}+\tfrac{C_1}{\mathcal{N}_1}\right)\ket{\alpha}+\left(\tfrac{C_0}{\mathcal{N}_0}-\tfrac{C_1}{\mathcal{N}_1}\right)\ket{-\alpha}.
\end{split}
\end{equation}This is essentially the $d=2$, $\Delta=2\alpha$ case. Suppose an unknown displacement $\delta$ occurs on the cat qubit to give us the state:
\begin{equation}
\begin{split}
    D(\delta)\ket{\psi_{cat}} &=\left(\tfrac{C_0}{\mathcal{N}_0}+\tfrac{C_1}{\mathcal{N}_1}\right)\ket{\alpha+\delta}+\left(\tfrac{C_0}{\mathcal{N}_0}-\tfrac{C_1}{\mathcal{N}_1}\right)\ket{-\alpha +\delta}.
\end{split}
\end{equation}Extending generalized QSP to hybrid systems in the future may allow us to implement polynomials of neither even or odd parity, however for now we restrict ourselves to standard single-variable hybrid QSP where we can only implement polynomials of definite parity. This requires us to apply a displacement of $D(\alpha)$ to give us the state:
\begin{equation}
\begin{split}
    &D(\alpha)D(\delta)\ket{\psi_{cat}}\\
    &=\left(\tfrac{C_0}{\mathcal{N}_0}+\tfrac{C_1}{\mathcal{N}_1}\right)\ket{2\alpha+\delta}+\left(\tfrac{C_0}{\mathcal{N}_0}-\tfrac{C_1}{\mathcal{N}_1}\right)\ket{\delta}.
\end{split}
\end{equation}Assume that $\delta < 2\alpha$ and that the variance of the coherent states $\ket{\pm\alpha}$ are much smaller than $2\alpha$. Introduce an ancilla qubit initialized as $\ket{0}$ and apply a hybrid QSP sequence which implements a step function centered at position $x=\alpha$. This gives us the state:
\begin{equation}
\begin{split}
     &R_{QSP}D(\alpha)D(\delta)\ket{\psi_{cat}}_O\ket{0}_Q\\
     &= \left(\tfrac{C_0}{\mathcal{N}_0}+\tfrac{C_1}{\mathcal{N}_1}\right)\ket{2\alpha+\delta}_O\ket{1}_Q+\left(\tfrac{C_0}{\mathcal{N}_0}-\tfrac{C_1}{\mathcal{N}_1}\right)\ket{\delta}_O\ket{0}_Q.
\end{split}
\end{equation}We now apply a conditional displacement of $-2\alpha$ to the qumode to give us:
\begin{equation}
\begin{split}
    &cD(-2\alpha)R_{QSP}D(\alpha)D(\delta)\ket{\psi_{cat}}_O\ket{0}_Q\\
    &= \left(\tfrac{C_0}{\mathcal{N}_0}+\tfrac{C_1}{\mathcal{N}_1}\right)\ket{\delta}_O\ket{1}_Q+\left(\tfrac{C_0}{\mathcal{N}_0}-\tfrac{C_1}{\mathcal{N}_1}\right)\ket{\delta}_O\ket{0}_Q.
\end{split}
\end{equation}Next we reset the qumode, giving us:
\begin{equation}
\begin{split}
    &G_{RESET}cD(-2\alpha)R_{QSP}D(\alpha)D(\delta)\ket{\psi_{cat}}_O\ket{0}_Q\\
    &= \ket{0}_O\left[\left(\tfrac{C_0}{\mathcal{N}_0}+\tfrac{C_1}{\mathcal{N}_1}\right)\ket{1}_Q+\left(\tfrac{C_0}{\mathcal{N}_0}-\tfrac{C_1}{\mathcal{N}_1}\right)\ket{0}_Q\right].
\end{split}
\end{equation}Next we apply a conditional displacement of $2\alpha$, yielding:
\begin{equation}
\begin{split}
    &cD(2\alpha)G_{RESET}cD(-2\alpha)R_{QSP}D(\alpha)D(\delta)\ket{\psi_{cat}}_O\ket{0}_Q\\
    &=\left(\tfrac{C_0}{\mathcal{N}_0}+\tfrac{C_1}{\mathcal{N}_1}\right)\ket{2\alpha}_O\ket{1}_Q+\left(\tfrac{C_0}{\mathcal{N}_0}-\tfrac{C_1}{\mathcal{N}_1}\right)\ket{0}_O\ket{0}_Q.
\end{split}
\end{equation}Applying the inverse QSP sequence $R_{QSP}^\dagger$ followed by a displacement of $-\alpha$ gives us:

\begin{widetext}
\begin{equation}
\begin{split}
    R_{QSP}^\dagger cD(2\alpha)G_{RESET}cD(-2\alpha)R_{QSP}D(\alpha)D(\delta)\ket{\psi_{cat}}_O\ket{0}_Q &= \ket{0}_Q\left[\left(\tfrac{C_0}{\mathcal{N}_0}+\tfrac{C_1}{\mathcal{N}_1}\right)\ket{\alpha}_O+\left(\tfrac{C_0}{\mathcal{N}_0}-\tfrac{C_1}{\mathcal{N}_1}\right)\ket{-\alpha}_O\right]\\
    &=\ket{0}_Q\ket{\psi_{cat}}_O.
\end{split}
\end{equation}
\end{widetext}This shows that provided that we can implement the QSP sequence and controlled displacement with high fidelity, we can correct an unknown displacement $\delta$ on the cat qubit provided that $\delta$ is smaller than $\alpha$ and that the variance of the coherent states $\ket{\pm \alpha}$ are small compared to $\alpha$.

\bibliographystyle{unsrt}
\bibliography{ref}
\end{document}